\newtheorem{theorem}{Theorem}
\newtheorem{lemma}{Lemma}
\newtheorem{Note}{Note}
\newtheorem{corollary}{Corollary}
\theoremstyle{definition}
\newtheorem{definition}{Definition}
\newcommand{\be}{\begin{equation}}
\newcommand{\ee}{\end{equation}}
\newcommand{\ben}{\begin{eqnarray}}
\newcommand{\een}{\end{eqnarray}}
\newcommand{\bes}{\begin{subequations}}
\newcommand{\ees}{\end{subequations}}
\newcommand{\bF}{\begin{figure}}
\newcommand{\eF}{\end{figure}}
\newcommand{\orcid}[1]{\href{https://orcid.org/#1}{\includegraphics[height = 2ex]{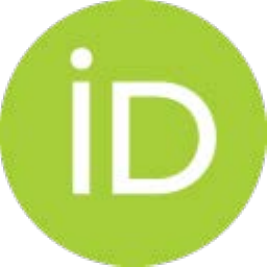}}}
\begin{document}

\title{Improved Accreditation of Analogue Quantum Simulation and Establishing Quantum Advantage}

\author{Andrew Jackson \orcid{0000-0002-5981-1604}}
\email{Andrew.J.Jackson@ed.ac.uk}
\affiliation{Department of Physics, University of Warwick, Coventry, CV4 7AL, United Kingdom}
\affiliation{School of Informatics, University of Edinburgh, Edinburgh, EH8 9AB, United Kingdom}

\author{Animesh Datta \orcid{0000-0003-4021-4655}}
\email{Animesh.Datta@warwick.ac.uk}
\affiliation{Department of Physics, University of Warwick, Coventry, CV4 7AL, United Kingdom}

%%%%%% Abstract %%%%%%

\begin{abstract}
    We improve on the results of [A. Jackson et al. Proc. Natl. Acad. Sci. U.S.A 121 (6). 2024] on the verification of analogue quantum simulators by eliminating the use of universal Hamiltonians, removing the need for two-qubit gates, and no longer assuming error is represented by \emph{identical} maps across simulations.
    This new protocol better reflects the reality of extant analogue simulators.
    It integrates well with recent complexity theoretic results, leading to a near-term feasible simulation-based route to establishing quantum advantage.
\end{abstract}

\date{\today}
\maketitle

%%%%%% MAIN TEXT %%%%%%
\section{Introduction}
Quantum computation and simulation are likely to be transformative for the study of fundamental physics as well as applications.
Key examples include the simulation of quantum systems~\cite{Lloyd, Poulin2015TheTS, Heyleaau8342, benedetti2020hardwareefficient, Clinton_2024} and the computation of partition functions~\cite{Chowdhury_2021, Jackson_2023, doi:10.1137/1.9781611977554.ch46}.
While the circuit model of quantum computing gets the bulk of the attention; the analogue model of quantum computing is capable of far greater feats of simulation than current digital computers, due to two important factors:
\begin{enumerate}
    \item The number of gates required for a modest (for an analogue simulator) simulation on a digital quantum computer quickly becomes prohibitive. 
    For instance, a simulation of the Hubbard model on a $10 \times 10$ lattice for a duration of $10 \hbar J^{-1}$ is already feasible on analogue simulators, however a digital quantum computer would require over a million gates~\cite{Daley2022}.
    \item Analogue simulators are typically far larger than their contemporaneous digital counterparts~\cite[Table.~1]{Daley2022}.
\end{enumerate}
It is therefore important to develop methods (known as verification protocols) to ascertain the correctness of noisy, error-prone real-world analogue quantum simulations. 
This is also important as verification is a key to demonstrating quantum advantage. The confidence necessary to establish the performing of the required task to demonstrate quantum advantage is provided by verification protocols. 

A key family of methods to ascertain the correctness of analogue quantum simulators is Hamiltonian learning~\cite{PhysRevLett.122.020504, PRXQuantum.2.010102}. 
By experimentally time-evolving a state that is invariant under the target time evolution, and estimating specific expected values of the resulting state, the experimentally applied Hamiltonian can be estimated and compared to the target Hamiltonian. However, this assimilates both state preparation and measurement errors into that of the Hamiltonian. Furthermore, it
does not translate into quantitative guarantees on the output of any given simulation.

Another approach to analogue verification consists of running a time evolution forward and then backward for the same duration: if there is no error, this returns the system to its initial state, which can be checked. This approach was furthered in Ref.~\cite{Shaffer2021}, which introduced a protocol that evolves an initial state through a closed loop; returning the system to the initial state. This provides a measure of how faithfully the simulator has implemented the requested Hamiltonian but, again, cannot produce quantitative guarantees on a 
given simulation's output.

More recently, analogue quantum accreditation~\cite{jackson2023accreditation} exploited universality~\cite{Cubitt_2018, zhou2021strongly, PRXQuantum.3.010308} to emulate any desired simulation using an XY-interaction Hamiltonian. It then used the anti-commutation properties of the XY-interaction to exactly invert\footnote{This means to reverse the direction of the time evolution while preserving the Hamiltonian governing the time evolution.} half of the desired simulation, which then cancelled out with the other half. Hence the overall simulation was reduced -- in the ideal error-free case -- to the identity.
In the actual, noisy and error-prone case, errors could be reduced to stochastic Pauli error~\cite{Wallman_2016, Hashim_2021} then detected. Notably, this protocol gives rigorous bounds on the ideal-actual variation distance (as in Def.~\ref{def:idealActualVarDist}) of the given simulation's outputs.
\begin{definition}
    \label{def:idealActualVarDist}
    For any execution, $\Tilde{\mathcal{C}}$, of a simulation, $\mathcal{C}$, its ideal-actual variation distance, $\nu \big( \Tilde{\mathcal{C}} \big)$, is the 
    total variation distance~\cite{feng2023deterministically} between the probability distribution it would sample from in the ideal case if there were no error and the distribution it actually samples from in the actual noisy, error-prone case.\\
    The ideal-actual variation distance is given by:
\begin{equation}
\label{eq:tvd}
\nu \big( \Tilde{\mathcal{C}} \big) = \dfrac{1}{2} \sum_{s \in \Omega} \bigg \vert P \big( s \big) - \Tilde{P}\big( s \big) \bigg \vert,
\end{equation}
where $\Tilde{P}(s)$ is the probability distribution of the measurement outcomes of the actual, erroneous simulation (i.e. $\Tilde{\mathcal{C}}$), which may differ from the ideal (i.e. error-free) probability distribution of the measurement outcomes, $P(s)$ (which is the distribution $\mathcal{C}$ is intended to sample from); and $\Omega$ is the sample space of the measurement outcomes.
\end{definition}

The accreditation protocol constructed here will provide an upper bound
$\epsilon_{\text{VD}} \in [0,1)$ such that $\nu \big( \Tilde{\mathcal{C}} \big) \leq \epsilon_{\text{VD}}.$
$\nu \big( \Tilde{\mathcal{C}} \big)$ is also central in some approaches to establishing quantum advantage~\cite{PhysRevLett.118.040502, PhysRevX.8.021010,
Kapourniotis_2019, ringbauer2024verifiablemeasurementbasedquantumrandom,liu2024efficientlyverifiablequantumadvantage}.

The key development in this paper is the move away from the reduction of Hamiltonians we want to simulate to XY-interaction Hamiltonians. 
This is enabled by aiming to only \textit{approximately} invert Hamiltonians, instead of the exact inversion used in Ref.~\cite{jackson2023accreditation} that required the reduction to an XY-interaction.
Using the results of Ref.~\cite{odake2024universal}, we can approximately invert any spin Hamiltonian to arbitrarily small error. This allows for an analogue verification protocol that does not require additional qubits, or mapping the Hamiltonian of interest to a universal Hamiltonian, or two-qubit gates. The trade-off is an increase in the number of single-qubit gates used and how much the target time evolution is split into many smaller time evolutions. These increases lead to greater error in the simulation.

Additionally, in Lemma~\ref{lem:twiceProbBound} (in Sec.~\ref{sec:Trap2Targ}), by taking a new approach to relating the error in trap and target simulations we remove the assumption that error is identical across distinct executions of the same circuit, as was required in Ref.~\cite{jackson2023accreditation}.

Finally, we discuss, in Sec.~\ref{sec:quantumAdvantage}, how the accreditation protocol presented herein -- which provides the ideal-actual variation distance (as  in Def.~\ref{def:idealActualVarDist}) of an analogue quantum simulation, can be readily used in any experiment aiming to demonstrate quantum advantage via analogue quantum simulation. These experiments are based on establishing  quantum advantage under complexity theoretic assumptions~\cite{PhysRevX.8.021010, ringbauer2024verifiablemeasurementbasedquantumrandom}. Their central result is that certain simulations $\mathcal{C}$, performed with $\nu \big( \Tilde{\mathcal{C}} \big)  \leq1 - 1/\sqrt{2} \approx 0.292$ are not classically tractable under certain assumptions.

The main result of this paper is Theorem~\ref{finalTheorem}. It shows that Protocol~\ref{fullProtocolSketch} (the protocol developed herein) functions as an accreditation protocol, as defined in Def.~\ref{AAQSDef}. It improves upon Ref.~\cite{jackson2023accreditation}; as such, the protocol presented therein is the most pertinent to compare our new protocol with. Therefore, we provide a comparison of the two protocols in Table~\ref{tab:protocolComparison}.
\begin{table}[h!]
\begin{tabular}{lcc}
                               & \underline{Ref.~\cite{jackson2023accreditation}} & \underline{This Paper}                                  \\
Uses Universal Hamiltonians & Yes                                           & No                                            \\
Potentially adds Extra Qubits              & Yes                                           & No                                            \\
Error is assumed to be ID                   & Yes                                           & No                                            \\
Requires Two-Qubit Gates       & Yes                                           & No                                            \\
Inversion is Approximate       & No                                            & Yes                                           \\
Number of Single-Qubit Gates   & $O (N)$                             & $O \big( N / \epsilon \big)$
\end{tabular}
\caption{Comparison of the protocol presented herein with that in Ref.~\cite{jackson2023accreditation}. $N$ is the number of qubits in the target Hamiltonian, $\epsilon$ is a chosen upper bound on the additive (or absolute) error in the diamond norm in implementing the approximately inverted time evolution, and ID stands for Identically Distributed.}
    \label{tab:protocolComparison}
\end{table}

The presentation of this new inversion method  (in Sec.~\ref{sec:invertingHamiltonians}) is the first section of this paper. 
In Sec. III, we define what we mean by an analogue quantum simulation and what it means to accredit an  analogue quantum simulation.
We then continue -- in Sec.~\ref{ErrorModelSubsection} -- with the presentation of the error model used throughout this paper. We then define our new analogue accreditation protocol and how it is used (in Sec.~\ref{sec:protocolPresent}), before concluding with a discussion of our results.

%\begin{Note}
    Throughout this paper, operators (i.e. unitary operators and CPTP maps) are denoted by caligraphic Roman letters (e.g. $\mathcal{A}, \mathcal{B}, \mathcal{C}$) and sets are denoted by bubble Roman letters (e.g. $\mathbb{A}, \mathbb{B}, \mathbb{C}$); except when they define a graph, when they are bold (e.g. $\mathbf{A}, \mathbf{B}, \mathbf{C}$).
    Additionally, in all definitions, the concept being defined is underlined.
%\end{Note}

\section{Inversion of General Spin Hamiltonians}
\label{sec:invertingHamiltonians}
The principal source of the improvement this paper presents is a method of approximately inverting any Hamiltonian~\cite{odake2024universal}.
It is used in Theorem~\ref{thm:alwaysInvertible}. 

\begin{theorem}
    \label{thm:alwaysInvertible}
    For any $N$-qubit spin Hamiltonian, $\mathcal{H}$, and time duration, $t \in \mathbb{R}$, there exists a sub-circuit (that we denote $\mathcal{B} (\mathcal{H}, t, \epsilon, 0)$) implementing $e^{-i \mathcal{H}t}$ that, purely by adding single-qubit gates, can be transformed into a sub-circuit (that we denote $\mathcal{B} (\mathcal{H}, t, \epsilon, 1)$) which approximately implements $e^{i \mathcal{H}t / (L-1)}$ to arbitrary additive error in the diamond norm, $\epsilon \in \mathbb{R}$ , where $L \in \mathbb{N}$ is efficiently computable.
    Both $\mathcal{B} (\mathcal{H}, t, \epsilon, 0)$ and $\mathcal{B} (\mathcal{H}, t, \epsilon, 1)$ require splitting the time evolution into $ O(t^2/\epsilon)$ parts, and $\mathcal{B} (\mathcal{H}, t, \epsilon, 1)$ also uses $O(N t^2/\epsilon)$ single-qubit gates.
\end{theorem}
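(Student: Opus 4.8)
The plan is to realise the forward sub-circuit as a first-order product formula and then convert it, slice by slice, into a fixed fraction of the inverse evolution using only inserted single-qubit Pauli gates, with the underlying Hamiltonian negation guaranteed by the universal inversion construction of Ref.~\cite{odake2024universal}. First I would write $\mathcal{B}(\mathcal{H},t,\epsilon,0)$ as an $n$-fold product of short slices $e^{-i\mathcal{H}\delta}$ with $\delta = t/n$; the stated part count indicates $n = O(t^2/\epsilon)$, which is exactly the first-order Trotter step count needed to reach additive diamond-norm error $\epsilon$ over total evolution time $t$.

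The core algebraic fact is that conjugating a slice by a Pauli string $P$ maps $e^{-i\mathcal{H}\delta}$ to $e^{-i(P\mathcal{H}P)\delta}$, so that inserting single-qubit Paulis reweights the generator. For any traceless $N$-qubit spin Hamiltonian the exact identity $\sum_{P\neq I}P\mathcal{H}P = -\mathcal{H}$ holds over all non-identity Pauli strings, because summing the conjugations over the entire Pauli group annihilates every non-identity term -- each commutes with half the group and anticommutes with the other half -- so that excluding only the identity conjugation leaves exactly $-\mathcal{H}$. A first-order product formula over this generating set therefore gives $\prod_{P\neq I}\big(P\,e^{-i\mathcal{H}\delta}\,P\big)\approx e^{+i\mathcal{H}\delta}$; cycling the inserted Paulis through the $L-1$ generators across the $n$ slices implements one inverse slice per $L-1$ forward slices, so the net evolution is $e^{+i\mathcal{H}\,n\delta/(L-1)} = e^{+i\mathcal{H}t/(L-1)}$. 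This both fixes $L$ (the size of the generating set, efficiently computable from $\mathcal{H}$) and yields $\mathcal{B}(\mathcal{H},t,\epsilon,1)$. Because every inserted operator is a tensor product of single-qubit Paulis, the transformation adds only single-qubit gates; merging adjacent insertions leaves $O(N)$ single-qubit gates per slice and hence $O(Nt^2/\epsilon)$ in total, while the part count remains $O(t^2/\epsilon)$.

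The remaining and, I expect, hardest step is the diamond-norm error budget for $\mathcal{B}(\mathcal{H},t,\epsilon,1)$ against $e^{+i\mathcal{H}t/(L-1)}$. I would split the error into (i) the Trotter error of the forward slicing and (ii) the product-formula error of the sign flip, each governed by commutators of the conjugated generators, and show both scale as $O(t^2/n)$ so that $n=O(t^2/\epsilon)$ suffices. The delicate point is that the sign-flip step involves many non-commuting conjugated terms, so bounding its contribution uniformly -- and confirming that the step count and gate count retain the claimed scaling with the generating-set size folded into the constants -- is exactly where I would lean on the quantitative guarantees of Ref.~\cite{odake2024universal}.
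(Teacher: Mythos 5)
Your core algebraic idea --- that Pauli conjugation flips the sign of the Hamiltonian when summed over a suitable set --- is indeed the engine of the paper's proof, where it appears as Lemma~\ref{lem:sumOverSigmas} (Lemma 9 of Ref.~\cite{odake2024universal}): $\sum_{\sigma\in\mathbb{G}'_{\mathcal{H}}}\sigma\mathcal{H}\sigma = L\,\mathrm{Tr}(\mathcal{H})\,\mathcal{I}/2^N$. But the set you sum over creates a genuine gap. You take \emph{all} non-identity Pauli strings, so your identity $\sum_{P\neq I}P\mathcal{H}P=-\mathcal{H}$ forces $L=4^N$, and your schedule cycles \emph{deterministically} through the $L-1$ generators, one per slice. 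The number of slices is then at least $4^N-1$ regardless of $t$ and $\epsilon$, and the gate count is $\Omega(N4^N)$; these are not "generating-set sizes folded into the constants" but an exponential-in-$N$ breakdown of the claimed $O(t^2/\epsilon)$ part count and $O(Nt^2/\epsilon)$ gate count (which, since the theorem tracks the $N$-dependence explicitly, must hold with constants uniform in $N$). The first-order product-formula error over $L-1$ mutually non-commuting conjugated generators likewise scales with $\binom{L-1}{2}$ commutator norms, again exponential. The paper avoids both problems with two ingredients your proposal lacks: (i) a Hamiltonian-tailored set $\mathbb{G}'_{\mathcal{H}}$, built by colouring the interaction hypergraph of $\mathcal{H}$ (Lemma~\ref{lem:GExists}), whose size $L$ is controlled by the structure of $\mathcal{H}$ rather than by $4^N$; and (ii) \emph{random}, not cyclic, conjugations: each slice applies a uniformly random $\sigma\in\mathbb{G}'_{\mathcal{H}}\setminus\{\mathcal{I}\}$, so the per-slice CPTP map is
\begin{equation*}
\Phi(\rho)=\frac{1}{L-1}\sum_{\sigma}\sigma e^{-it\mathcal{H}/M}\sigma\,\rho\,\sigma e^{it\mathcal{H}/M}\sigma,
\end{equation*}
which Taylor-expands to the inverse evolution $e^{i\mathcal{H}t/(M[L-1])}\rho\, e^{-i\mathcal{H}t/(M[L-1])}$ up to $O(t^2/M^2)$ in diamond norm, with $L$ entering only through the benign factor $L/(L-1)\le 2$. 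Chaining over the $M$ slices (Lemma~\ref{ChainingPropertyLemma}) gives total error $O(t^2/M)$, so $M=O(t^2/\epsilon)$ with no floor tied to $L$. Deferring "the hardest step" to Ref.~\cite{odake2024universal} does not close this gap, because what that construction actually supplies is precisely the small set plus randomised conjugation, not a cyclic schedule over the full Pauli group.

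Two smaller corrections. First, the forward circuit $\mathcal{B}(\mathcal{H},t,\epsilon,0)$ involves no Trotterisation: each of its $M$ parts is the full evolution $e^{-it\mathcal{H}/M}$, so it equals $e^{-i\mathcal{H}t}$ exactly (Lemma~\ref{lem:EfisTimeEvolution}); the splitting into $O(t^2/\epsilon)$ parts is needed only so that the \emph{same} circuit skeleton, after inserting single-qubit gates, approximates the inverse --- not to control any error in the forward direction. Second, your restriction to traceless $\mathcal{H}$ is unnecessary: in the paper's channel-level argument the trace term of Lemma~\ref{lem:sumOverSigmas} is proportional to the identity and cancels inside the commutator $[\mathcal{H},\rho]$ appearing at first order.
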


Theorem~\ref{thm:alwaysInvertible} is an original result that uses Ref.~\cite[Algorithm 5]{odake2024universal}.
Theorem~\ref{thm:alwaysInvertible} will be used in Sec.~\ref{sec:defTrapAndTarg}, when constructing trap and target simulations, and these simulations will only differ in the value of $j$ in their use of a single $\mathcal{B} (\mathcal{H}, t, \epsilon, j)$ sub-circuit. Therefore, due to Theorem~\ref{thm:alwaysInvertible}, they will only differ in their single-qubit gates but in the traps this will cause one half of the time evolution to cancel with the other (so the entire time evolution is equivalent to the identity).

The rest of Sec.~\ref{sec:invertingHamiltonians} is devoted to proving Theorem~\ref{thm:alwaysInvertible} culminating in its formal proof. 
We begin with the definition of $\mathcal{B} (\mathcal{H}, t, \epsilon, 0)$ mentioned in Theorem~\ref{thm:alwaysInvertible}.

\subsection{Defining the Forwards Evolution, $\mathcal{B} (\mathcal{H}, t, \epsilon, 0)$}
The construction of  $\mathcal{B} (\mathcal{H}, t, \epsilon, 0)$ is presented in Algorithm~\ref{alg:forwardTime} and is depicted\footnote{We use circuits as a presentational tool to show both analogue and hybrid simulations, but note that we definitively do not use the circuit model within this paper.} in Fig.~\ref{alg:forwardTime}.
\begin{figure}
    \centering
\begin{algorithm}[H]
%\SetAlgoLined
$\mathbf{Input:}$ \\
$\bullet$ A Hamiltonian, $\mathcal{H}$, where $c_j$ is the coefficient of its $j$th term\\
$\bullet$ A evolution duration, $t \in \mathbb{R}$\\
$\bullet$ A maximum permissible additive error -- in terms of the diamond norm -- in the corresponding approximate time inversion, $\epsilon \in \mathbb{R}$
\vspace{0.3cm}
    \begin{enumerate}
        \item Calculate the $\mathbb{G}'_{\mathcal{H}}$ required to invert $\mathcal{H}$, as defined in Def.~\ref{def:G'}
        \item Calculate $L = \vert \mathbb{G}'_{\mathcal{H}} \vert$
        \item Calculate $M =  \dfrac{2 t^2  \big[ \sum_{j = 1} \big( \vert c_j \vert \big) \big]^2 L}{\epsilon (L - 1)} $
        \item Initialize a blank circuit, $\textit{circ} $
        \item Set $\text{timeStep} = 1$
        \item \While{\text{timeStep} $\leq M$}{
            \begin{enumerate}
                \item Add $\mathcal{I}^{\otimes N} e^{-i t \mathcal{H} / M}  \mathcal{I}^{\otimes N}$ to $\textit{circ}$
                \item $\text{timeStep } += 1$
            \end{enumerate}
        }
    \end{enumerate}
\vspace{0.1cm}
$\mathbf{Return}:$ $\textit{circ}$
\caption{$\mathcal{B}(\mathcal{H}, t, \epsilon, 0)$ construction algorithm
 \label{alg:forwardTime}}
\end{algorithm}
\end{figure}
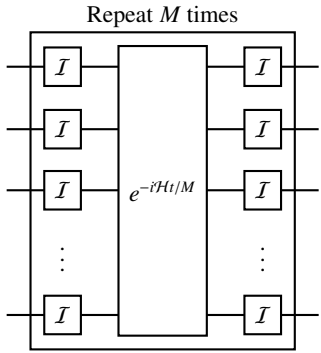
\begin{figure}[h]
    \centering
 \begin{quantikz}[row sep=0.3cm]
% First qubit
\qw & \gate{\mathcal{I}}  \gategroup[5,steps=3,style={inner
sep=2pt}]{Repeat $M$ times} & \gate[wires=5, nwires=4]{e^{-i \mathcal{H} t / M }} & \gate{\mathcal{I}} & \qw &\\
% Second qubit
\qw & \gate{\mathcal{I}} && \gate{\mathcal{I}} & \qw &\\
% Third qubit
\qw & \gate{\mathcal{I}} && \gate{\mathcal{I}} & \qw &\\
& \vdots &  \vdots &  \vdots & \\
% Fourth qubit
\qw & \gate{\mathcal{I}} & &\gate{\mathcal{I}}& \qw &
\end{quantikz}
    \caption{A depiction of $\mathcal{B}(\mathcal{H} , t, \epsilon, 0)$, that is equivalent to $e^{-i \mathcal{H} t}$.}
    \label{fig:basecircuitForwardEvolution}
\end{figure}

Fig.~\ref{fig:basecircuitForwardEvolution} shows that a key requirement of $\mathcal{B}\big( \mathcal{H}, t, \epsilon, 0 \big)$ is that it implements $e^{- i \mathcal{H} t}$, as is shown in Lemma~\ref{lem:EfisTimeEvolution}, which is stated without proof.
\begin{lemma}
    \label{lem:EfisTimeEvolution}
     For any spin Hamiltonian, $\mathcal{H}$, $\forall t \in \mathbb{R}$, $ \mathcal{B}\big(\mathcal{H}, t, \epsilon, 0 \big) = e^{- i \mathcal{H} t}$.
\end{lemma}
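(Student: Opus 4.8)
The plan is to read the statement directly off the construction in Algorithm~\ref{alg:forwardTime} and Fig.~\ref{fig:basecircuitForwardEvolution}, since in the ideal error-free case the claim is essentially definitional. First I would observe that the loop in Algorithm~\ref{alg:forwardTime} appends $M$ copies of the single block $\mathcal{I}^{\otimes N}\, e^{-i t \mathcal{H}/M}\, \mathcal{I}^{\otimes N}$ to $\textit{circ}$, so that by construction $\mathcal{B}(\mathcal{H},t,\epsilon,0)$ is the ordered composition of these $M$ identical blocks.

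Second, I would simplify a single block. Since $\mathcal{I}^{\otimes N}$ is the identity on all $N$ qubits, pre- and post-composing with it leaves the central factor untouched, so each block equals $e^{-i t \mathcal{H}/M}$. I would emphasise that the identity insertions are inert in this lemma and serve only as placeholders into which single-qubit gates are later inserted to produce $\mathcal{B}(\mathcal{H},t,\epsilon,1)$; here they contribute nothing.

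Third, I would compose the $M$ blocks. Because every block is the same operator, and each is a function of the single Hamiltonian $\mathcal{H}$, the factors mutually commute and the composition collapses to the $M$-th power, giving $\big(e^{-i t \mathcal{H}/M}\big)^{M} = e^{-i t \mathcal{H}}$, which is the claimed identity. No Trotter-type approximation is needed precisely because the split is into repetitions of the \emph{same} generator $\mathcal{H}$, so the exponents add exactly.

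There is no substantive obstacle: the argument is a one-line power-of-an-exponential identity. The only point requiring a word of care is that $M$, computed in step~3 of Algorithm~\ref{alg:forwardTime}, is interpreted as the integer count of loop iterations while the per-step duration is $t/M$; as long as the iteration count and the step size are matched to the same value $M$ (taking $\lceil \cdot \rceil$ if one wishes $M\in\mathbb{N}$), the accumulated exponent is exactly $-i\mathcal{H}t$ and the equality $\mathcal{B}(\mathcal{H},t,\epsilon,0)=e^{-i\mathcal{H}t}$ holds with no residual error. This is why the lemma can legitimately be stated as an exact equality rather than an approximation, independent of the particular value of $\epsilon$.
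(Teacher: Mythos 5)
Your proof is correct, and in fact the paper gives no proof at all — it explicitly states this lemma "without proof" — so your write-up supplies precisely the definitional argument the paper is implicitly relying on: the $\mathcal{I}^{\otimes N}$ insertions are inert, and the $M$ identical blocks $e^{-i t \mathcal{H}/M}$ share a single generator, so their exponents add exactly to give $e^{-i\mathcal{H}t}$ with no Trotter error. Your closing caveat — that the $M$ computed in step 3 of Algorithm~\ref{alg:forwardTime} need not be an integer, so the loop count and the step size $t/M$ must be matched (e.g.\ by taking $\lceil M \rceil$) for the accumulated exponent to be exactly $-i\mathcal{H}t$ — is a genuine point of care that the paper silently glosses over.
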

\subsection{Defining the Inverted Evolution, $\mathcal{B} (\mathcal{H}, t, \epsilon, 1)$}
\label{sec:AddingSingleToInvert}
We now turn to augmenting $\mathcal{B} (\mathcal{H}, t, \epsilon, 0)$ with single-qubit gates to produce $\mathcal{B} (\mathcal{H}, t, \epsilon, 1)$, which is equivalent to implementing $e^{i \mathcal{H} t / (L - 1)}$. Here $L \in \mathbb{N}$ is defined in terms of sets that need defining first. 

We start with the foundational one, $\mathbb{G}_{\mathcal{H}}$ in Def.~\ref{def:G}.
\begin{definition}
    \label{def:G}
    \underline{$\mathbb{G}_{\mathcal{H}}$} $\subseteq \big \{ q \sigma \text{ } \vert \text{ } \sigma \in \{ \mathcal{X}, \mathcal{Y}, \mathcal{Z}, \mathcal{I}\}^N, q \in \{ \pm 1, \pm i \} \big \}$ depends exclusively on $\mathcal{H} = \sum_{\mathbf{u} \in J} \bigg( c_{\mathbf{u}} \mathcal{P}_{\mathbf{u}} \bigg)$ and is defined by:
    \begin{align}
        \forall \mathbf{u} \in J \backslash \{ \Vec{0} \}, 
        \exists \mathcal{\sigma} \in \mathbb{G}_{\mathcal{H}} \text{ such that } \big \{ \mathcal{\sigma}, \mathcal{P}_{\mathbf{u}} \big \} = 0.
    \end{align}
\end{definition}

As shown in Lemma~\ref{lem:GExists} (in Appendix~\ref{sec:AppendixProofOfTheorem:inversionWorks}) for any Hamiltonian a corresponding $\mathbb{G}_{\mathcal{H}}$ always exists and can be found efficiently.
From $\mathbb{G}_{\mathcal{H}}$ we define a closely related set that we call $\mathbb{G}_{\mathcal{H}}'$ and define formally in Def.~\ref{def:G'}.
\begin{definition}
    \label{def:G'}
    \underline{$\mathbb{G}'_{\mathcal{H}}$} $\subseteq \{ \mathcal{X}, \mathcal{Y}, \mathcal{Z}, \mathcal{I}\}^N$ is defined as:
    \begin{align}
        \forall \mathcal{\sigma} \in \mathbb{G}_{\mathcal{H}}, \exists \mathcal{\sigma}' \in \mathbb{G}_{\mathcal{H}}'
        \text{ such that }
        \mathcal{\sigma}' \propto \mathcal{\sigma},
    \end{align}
    where $\propto$ denotes equality up to a factor in $\{\pm1,\pm i \}$. 
\end{definition}
Finally, we can define the quantity that began this chain of definitions, $L \in \mathbb{N}.$ %, in Def.~\ref{def:L}.
\begin{definition}
    \label{def:L}
    \underline{$L$} $\in \mathbb{N}$ depends exclusively on the Hamiltonian, $\mathcal{H}$, being  simulated and is defined as $ L = \big \vert \mathbb{G}_{\mathcal{H}}' \big \vert$.
\end{definition}
%\begin{Note}
    $L$ is independent of $M$, and $N$ only provides a bound on $L$ as $4^N$ bounds the number of terms in the Hamiltonian,  which in turn bounds the size of $\mathbb{G}'_{\mathcal{H}}$.
%\end{Note}

Using the sets and variables defined above, we present Algorithm~\ref{alg:addingSinglesForInvert} to generate $\mathcal{B}\big(\mathcal{H}, t, \epsilon, 1).$
Fig.~\ref{fig:basecircuitInversionSubcircuit} depicts $\mathcal{B}\big(\mathcal{H}, t, \epsilon, 1).$
\begin{figure}
    \centering
\begin{algorithm}[H]
%\SetAlgoLined
$\mathbf{Input:}$ \\
$\bullet$ A Hamiltonian, $\mathcal{H}$, where $c_j$ is the coefficient of its $j$th term\\
$\bullet$ A evolution duration, $t \in \mathbb{R}$\\
$\bullet$ A maximum permissible additive error -- in terms of the diamond norm -- in the approximate time inversion, $\epsilon \in \mathbb{R}$
\vspace{0.3cm}
    \begin{enumerate}
        \item Calculate the $\mathbb{G}'_{\mathcal{H}}$ required to invert $\mathcal{H}$, as defined in Def.~\ref{def:G'}
        \item Calculate $L = \vert \mathbb{G}'_{\mathcal{H}} \vert$
        \item Calculate $M =  \dfrac{2 t^2  \big[ \sum_{j = 1} \big( \vert c_j \vert \big) \big]^2 L}{\epsilon (L - 1)} $
        \item Initialize a blank circuit, $\textit{circ}$
        \item Set $\text{timeStep} = 1$
        \item \While{\text{timeStep} $\leq M$}{
            \begin{enumerate}
                \item Choose $\mathcal{\sigma}$ uniformly at random from $\mathbb{G}'_{\mathcal{H}}$
                \item Add $\mathcal{\sigma} e^{-i t \mathcal{H} / M}  \mathcal{\sigma}$ to $\textit{circ}$
                \item $\text{timeStep } += 1$
            \end{enumerate}
        }
    \end{enumerate}
\vspace{0.1cm}
$\mathbf{Return}:$ $\textit{circ}$
\caption{$\mathcal{B}(\mathcal{H}, t, \epsilon, 1)$ construction algorithm
 \label{alg:addingSinglesForInvert}}
\end{algorithm}
\end{figure}

\begin{figure}[h]
    \centering
 \begin{quantikz}[row sep=0.3cm]
% First qubit
\qw & \gate{\mathcal{B}^{(k)}_1}  \gategroup[5,steps=3,style={inner
sep=2pt}]{Repeat $M$ times, indexing the loops by $k$} & \gate[wires=5, nwires=4]{e^{-i \mathcal{H} t / M}} & \gate{\mathcal{B}^{(k)}_1} & \qw &\\
% Second qubit
\qw & \gate{\mathcal{B}^{(k)}_2} && \gate{\mathcal{B}^{(k)}_2} & \qw &\\
% Third qubit
\qw & \gate{\mathcal{B}^{(k)}_3} && \gate{\mathcal{B}^{(k)}_3} & \qw &\\
 & \vdots &  \vdots &  \vdots &\\
% Fourth qubit
\qw & \gate{\mathcal{B}^{(k)}_N} & &\gate{\mathcal{B}^{(k)}_N}& \qw &
\end{quantikz}
    \caption{A depiction of a subcircuit, $\mathcal{B}(\mathcal{H}, t, \epsilon, 1)$, that is approximately equal to $e^{i \mathcal{H} t/(L - 1)}$, where each $\mathcal{B}^{(k)}_j$ is an independent -- from all other $\mathcal{B}^{(k)}_j$ -- uniformly chosen element of the relevant $\mathbb{G}_{\mathcal{H}}'$.}
    \label{fig:basecircuitInversionSubcircuit}
\end{figure}
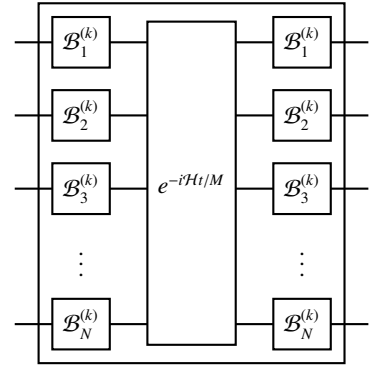

We now present the key result concerning $\mathcal{B}(\mathcal{H}, t, \epsilon, 1)$: Theorem~\ref{thm:inversioWorks},  proven in Appendix~\ref{sec:AppendixProofOfTheorem:inversionWorks}.

\begin{theorem}
    \label{thm:inversioWorks}
    For any Hamiltonian, $\mathcal{H}$, and $t \in \mathbb{R}$, there exists an efficiently computable $\mathbb{G}_{\mathcal{H}}'$ such that Algorithm~\ref{alg:addingSinglesForInvert} produces a $\mathcal{B}(\mathcal{H}, t, \epsilon, 1)$ that approximates $e^{i \mathcal{H} t/(L - 1)}$ , to within arbitrary additive error $\epsilon$ in the diamond norm.
\end{theorem}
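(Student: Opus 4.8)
The plan is to establish the two assertions of Theorem~\ref{thm:inversioWorks} in turn: first that a suitable $\mathbb{G}_{\mathcal{H}}'$ exists and is efficiently computable, and second that the random circuit returned by Algorithm~\ref{alg:addingSinglesForInvert} reproduces $e^{i\mathcal{H}t/(L-1)}$ to additive diamond-norm error $\epsilon$. The existence-and-efficiency claim I would discharge via Lemma~\ref{lem:GExists}: for every non-identity Pauli term $\mathcal{P}_{\mathbf{u}}$ there is at least one Pauli (acting on any site where $\mathcal{P}_{\mathbf{u}}$ is non-trivial) that anticommutes with it, so a covering set $\mathbb{G}_{\mathcal{H}}$ in the sense of Def.~\ref{def:G} can be assembled by iterating over the (at most polynomially many) terms of $\mathcal{H}$; stripping phases (Def.~\ref{def:G'}) and deduplicating then yields $\mathbb{G}_{\mathcal{H}}'$, and hence $L = |\mathbb{G}_{\mathcal{H}}'|$ (Def.~\ref{def:L}), in polynomial time.

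For the approximation I would first reduce to a single loop iteration and then accumulate. Because each $\sigma \in \mathbb{G}_{\mathcal{H}}'$ is a self-inverse Pauli string, conjugation passes through the exponential: writing $\tau = t/M$,
\[
\sigma\, e^{-i\tau\mathcal{H}}\, \sigma = e^{-i\tau\, \sigma\mathcal{H}\sigma}, \qquad \sigma\, \mathcal{P}_{\mathbf{u}}\, \sigma = \pm\, \mathcal{P}_{\mathbf{u}},
\]
where the sign is $-1$ precisely when $\{\sigma,\mathcal{P}_{\mathbf{u}}\}=0$, so conjugation merely flips the sign of every term of $\mathcal{H}$ that anticommutes with $\sigma$. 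The crux is then the algebraic identity asserting that, averaged uniformly over $\mathbb{G}_{\mathcal{H}}'$, these sign flips realise the negated, rescaled Hamiltonian on every non-identity term,
\[
\frac{1}{L}\sum_{\sigma \in \mathbb{G}_{\mathcal{H}}'} \sigma\, \mathcal{H}\, \sigma = -\frac{\mathcal{H}}{L-1},
\]
up to an identity/global-phase contribution; this is the cancellation that Def.~\ref{def:G} was engineered to guarantee and that is supplied by Ref.~\cite[Algorithm 5]{odake2024universal}. This is the step I expect to be the main obstacle: it is where the precise bookkeeping of the $(L-1)$ factor and the treatment of the identity element of $\mathbb{G}_{\mathcal{H}}'$ must be made exact, and it is what fixes the effective generator of the averaged channel to be $-\mathcal{H}/(L-1)$.

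With that identity in hand, I would expand the averaged single-iteration channel $\bar{\Phi}(\rho) = \tfrac{1}{L}\sum_\sigma \big(\sigma\, e^{-i\tau\mathcal{H}}\,\sigma\big)\,\rho\,\big(\sigma\, e^{-i\tau\mathcal{H}}\,\sigma\big)^{\dg}$ to second order in $\tau$. To first order $\bar{\Phi}$ coincides with the coherent evolution generated by $-\mathcal{H}/(L-1)$, that is with $e^{i\tau\mathcal{H}/(L-1)}$; the residual, comprising the $O(\tau^2)$ coherent Trotter term together with the decoherent contribution produced by the averaging, is bounded in diamond norm by $O\big(\tau^2\|\mathcal{H}\|^2\big)$, using $\|\mathcal{H}\| \le \sum_j |c_j|$.

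Finally I would accumulate the $M$ iterations. By sub-multiplicativity of the diamond norm and the fact that CPTP maps are diamond-norm contractions, the total error telescopes to at most $M$ times the per-step error, i.e.\ $O\big(M\tau^2\|\mathcal{H}\|^2\big) = O\big(t^2(\sum_j|c_j|)^2/M\big)$. Substituting the value $M = 2t^2[\sum_j|c_j|]^2 L/[\epsilon(L-1)]$ prescribed in Algorithm~\ref{alg:addingSinglesForInvert} renders this at most $\epsilon$, so the composition $\bar{\Phi}^{\circ M}$ — which is exactly the expected channel of the (independently sampled) circuit returned by the algorithm — lies within $\epsilon$ in diamond norm of $\big(e^{i\tau\mathcal{H}/(L-1)}\big)^{M} = e^{i\mathcal{H}t/(L-1)}$, as claimed. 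Beyond the algebraic identity, the remaining care is in the second-order diamond-norm estimate: one must check that its constant is genuinely absorbed by the factor $2$ and the $L/(L-1)$ appearing in $M$, rather than merely matching the scaling.
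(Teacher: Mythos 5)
Your proof follows the same architecture as the paper's: existence and efficient computability of $\mathbb{G}_{\mathcal{H}}'$ via Lemma~\ref{lem:GExists}, a first-order analysis of the averaged single-iteration channel to identify the effective generator $-\mathcal{H}/(L-1)$, a per-step $O(\tau^2 \Vert \mathcal{H} \Vert^2)$ diamond-norm estimate, accumulation over the $M$ iterations by chaining/contractivity (the paper's Lemma~\ref{ChainingPropertyLemma}), and substitution of the algorithm's value of $M$ with $\Vert \mathcal{H} \Vert_{\diamond} \leq \sum_j \vert c_j \vert$.

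However, the step you yourself single out as the crux is stated incorrectly, and as written the argument fails there. Your identity $\frac{1}{L}\sum_{\sigma \in \mathbb{G}_{\mathcal{H}}'} \sigma \mathcal{H} \sigma = -\mathcal{H}/(L-1)$, with the average taken over \emph{all} of $\mathbb{G}_{\mathcal{H}}'$ (identity included, weight $1/L$), contradicts the very result that supplies the cancellation: Lemma~\ref{lem:sumOverSigmas} (Lemma 9 of Ref.~\cite{odake2024universal}) states that $\sum_{\sigma \in \mathbb{G}_{\mathcal{H}}'} \sigma \mathcal{H} \sigma = \frac{L \mathrm{Tr}(\mathcal{H})}{2^N} \mathcal{I}$, i.e.\ the full-set average is proportional to the identity. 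Hence for the channel $\bar{\Phi}$ you define, the first-order commutator term vanishes entirely, and $\bar{\Phi}$ approximates the \emph{identity} evolution to $O(\tau^2)$ rather than $e^{i\tau \mathcal{H}/(L-1)}$. The correct bookkeeping --- and this is exactly what the paper's proof does --- is to average over $\mathbb{G}_{\mathcal{H}}'' = \mathbb{G}_{\mathcal{H}}' \backslash \{ \mathcal{I} \}$ with weight $1/(L-1)$: then $\sum_{\sigma \in \mathbb{G}_{\mathcal{H}}''} \sigma \mathcal{H} \sigma = \frac{L \mathrm{Tr}(\mathcal{H})}{2^N} \mathcal{I} - \mathcal{H}$, whose identity part drops out of the commutator with $\rho$, leaving precisely the generator $-\mathcal{H}/(L-1)$. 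In other words, the ``identity/global-phase contribution'' you set aside is not a small correction to your identity; it is the entirety of the full-set average, and the inverse evolution arises only from removing the identity element and renormalizing by $L-1$. With that repair (reading Algorithm~\ref{alg:addingSinglesForInvert} as sampling from $\mathbb{G}_{\mathcal{H}}''$, as the paper's proof implicitly does), the remainder of your argument --- the second-order estimate, the telescoping over $M$ steps, and the substitution of $M = 2t^2 \big[ \sum_j \vert c_j \vert \big]^2 L / [\epsilon (L-1)]$ --- goes through as in the paper.
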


\begin{proof}[Proof of Theorem~\ref{thm:alwaysInvertible}]
    $\mathcal{B}(\mathcal{H}, t, \epsilon, 0)$ (as in Fig.~\ref{fig:basecircuitForwardEvolution}) is the required subcircuit that, as shown in Lemma~\ref{lem:EfisTimeEvolution}, is equivalent to $e^{-i \mathcal{H}t}$.
    
Purely by adding single-qubit gates, $\mathcal{B}(\mathcal{H}, t, \epsilon, 0)$ can be transformed into $\mathcal{B}(\mathcal{H}, t, \epsilon, 1)$ (as in Fig.~\ref{fig:basecircuitInversionSubcircuit}), which is approximately equivalent to arbitrary additive error in the diamond norm to $e^{i \mathcal{H}t / (L-1)}$ (as shown in Theorem~\ref{thm:inversioWorks}).
    
The single-qubit gates added to $\mathcal{B}(\mathcal{H}, t, \epsilon, 0)$, to produce $\mathcal{B}(\mathcal{H}, t, \epsilon, 1)$, are uniformly random elements of $\mathbb{G}_{\mathcal{H}}'$, which is guaranteed to exist and be efficiently computable by Theorem~\ref{thm:inversioWorks}.
    
    By counting the resources in the proof of Theorem~\ref{thm:inversioWorks}, we can see that $\mathcal{B} (\mathcal{H}, t, \epsilon, 0)$ and $\mathcal{B} (\mathcal{H}, t, \epsilon, 1)$ require splitting the time evolution into $ O(t^2/\epsilon)$ parts, and $\mathcal{B} (\mathcal{H}, t, \epsilon, 1)$ uses $O(N t^2/\epsilon)$ single-qubit gates.
\end{proof}

\section{The Hybrid Digital-Analogue Model and Defining Accreditation}
\begin{definition}
\label{AQSDef}
    An \underline{analogue quantum simulation} takes as inputs:
    \begin{enumerate}
   \item The description of an initial product state $\ket{\psi_0}$,
     \item  A time-independent Hamiltonian, $\mathcal{H}$,
   \item A simulation duration, $t \in \mathbb{R}$,
\item  A set of single-qubit measurements, $\mathcal{M}$.

\end{enumerate}
    The simulation prepares $\ket{\psi_0}$ then applies the time evolution generated by $\mathcal{H}$, for the duration $t$, followed by  the measurements in $\mathcal{M}$. It returns, as the final output, the results of the measurements in $\mathcal{M}$.
\end{definition}

While an analogue quantum simulator has many uses, it does not suffice for accrediting analogue quantum simulations; which is the aim of this paper and defined formally in Def.~\ref{AAQSDef}. 
However, to perform, and define, accreditation protocols, we first require a model where analogue quantum simulation is augmented with the ability to apply single-qubit gates, defined in Def.~\ref{HQSDef}.
\begin{definition}
\label{HQSDef}
    A \underline{hybrid quantum simulation} (HQS) takes the four inputs of the analogue quantum simulation in Def.~\ref{AQSDef} and
    \begin{enumerate}
    \setcounter{enumi}{4}
    \item An ordered set, $\mathbb{S}$, of single-qubit\footnote{This definition is less demanding than the one in Ref.~\cite{jackson2023accreditation} by not allowing two-qubit gates and is a strict subset of the gate sequences allowed there. This moves HQSs closer to practicality as contemporary analogue simulators tend to either have, or intend to have, single-qubit gates but no many-qubit gates.} quantum gates with corresponding time-stamps $\{t_{\gamma} \text{ } \vert  \text{ }  \gamma \in \mathbb{S} \}$ denoting when each is applied. 
    \end{enumerate}
\end{definition}
As promised, we now define accredited analogue quantum simulations and accreditation protocols, in Def.~\ref{AAQSDef}.
\begin{definition}
\label{AAQSDef}
An \underline{accredited analogue quantum simulation} runs on a HQS. 
It takes all inputs of an analogue quantum simulation in Definition~\ref{AQSDef} and two parameters $\alpha, \theta \in [0,1)$.
It returns the same measurement outcomes as an analogue quantum simulation, and additionally an $\epsilon_{\text{VD}} \in [0,1)$ that upper bounds the ideal-actual variation distance (in Def. \ref{def:idealActualVarDist}) of the execution from which the measurement outcomes are obtained, with accuracy $\theta$ and confidence $\alpha$. 
\end{definition}

\section{Error model}
\label{ErrorModelSubsection}
\subsection{Redaction}
The error model used in this paper is expressed formally in terms of redaction, as defined in the following.
\begin{definition}
\label{def:redactedGates}
    A gate in a HQS is \underline{redacted} if the gate's position in the simulation (e.g. when it is applied and to which qubits) is specified but the operator it represents (e.g. if it is a Pauli $\mathcal{X}$ gate or Pauli $\mathcal{Z}$ gate) is not.
    
    Similarly, a HQS or circuit is redacted if all of its single-qubit gates (except the identity) are redacted. For an example of a redacted circuit, see Fig.~\ref{fig:redactionExample}.
    \begin{figure}
    \centering
    \includegraphics[width=0.29\textwidth]{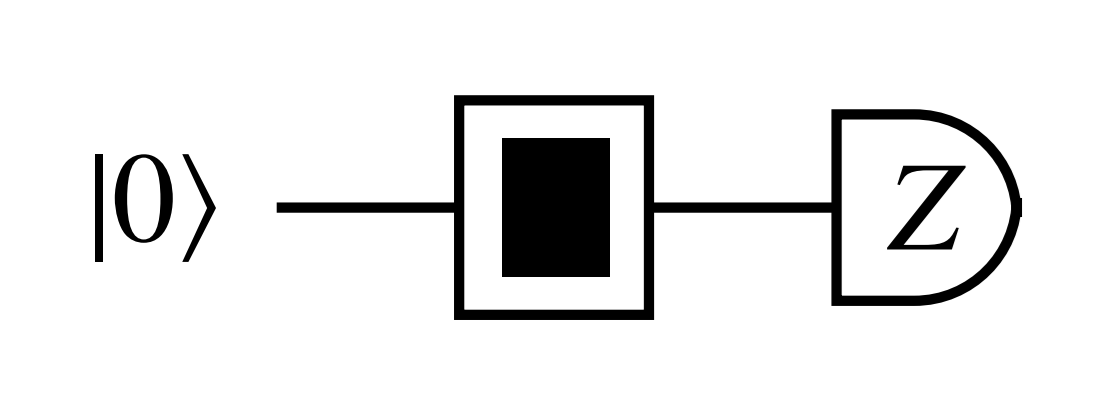}
    \caption{(Ref.\cite[Fig.1]{jackson2024accreditationlimitedadversarialnoise}) An example circuit with a redacted gate (the block in the middle of the circuit). Note how the location of the gate is depicted -- in terms of when it is applied and which qubits it acts on -- but what operation the gate represents is hidden.} \label{fig:redactionExample}
\end{figure}
\end{definition}
\begin{definition}
    If given a redacted circuit representing a HQS, the set of all HQSs that the given redacted HQS could possibly be, if the redactions were removed, is called its \underline{redaction class}.
\end{definition}

\begin{definition}
    Two given HQSs are said to be \underline{in the same redaction class} if there exists a redacted HQS such that both the given HQSs are in its redaction class.
    
    For example, all the circuits in Fig.~\ref{fig:redactionClassExample} are within the same redaction class as the redacted circuit in Fig.~\ref{fig:redactionExample} could be any one of the circuits in Fig.~\ref{fig:redactionClassExample} if its redaction were removed.
    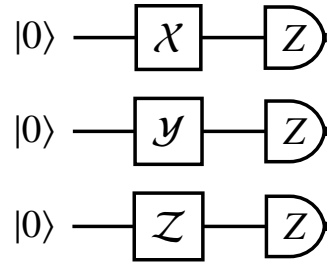
\begin{figure}
    \centering
 \begin{adjustbox}{width=0.25\textwidth}
     \begin{quantikz}
    \lstick{$\ket{0}$} & \gate{\mathcal{X}} & \meterD{Z}
    \end{quantikz}
    \end{adjustbox}
    \begin{adjustbox}{width=0.25\textwidth}
    \begin{quantikz}
    \lstick{$\ket{0}$} & \gate{\mathcal{Y}} & \meterD{Z}
    \end{quantikz}
    \end{adjustbox}
    \begin{adjustbox}{width=0.25\textwidth}
    \begin{quantikz}
    \lstick{$\ket{0}$} & \gate{\mathcal{Z}} & \meterD{Z}
    \end{quantikz}
    \end{adjustbox}
    \caption{(Ref.\cite[Fig.2]{jackson2024accreditationlimitedadversarialnoise}) Example circuits within the same redaction class, where the redaction class corresponds to the redacted circuit in Fig.~\ref{fig:redactionExample}. Note that the circuit in Fig.~\ref{fig:redactionExample} could be any of these if the redaction were removed.}
     \label{fig:redactionClassExample}
\end{figure}
\end{definition}
\begin{definition}
    Let $\mathbb{D}$ be a redaction class and $\mathcal{C} \in \mathbb{D}$. If a subset of the non-identity single-qubit gates in $\mathcal{C}$ are replaced with identity gates and then the remaining non-identity single-qubit gates are redacted, the redaction class of the resulting HQS is a \underline{reduced redaction class of $\mathbb{D}$}. 
\end{definition}
 \subsection{Statement of Error Model}
\label{sec:ErrorModelStatement}
We model \emph{any} erroneous implementation of \emph{any part} of a HQS as its error-free implementation followed, or preceded, by an error operator such that:
\begin{itemize}
    \item[E1.] The error operator is a completely positive trace preserving (CPTP) map on the simulator and its environment. It is independent of the error in any other simulation. 	
     \item[E2:] Let $\mathbb{D}$ be a redaction class and $\mathbb{D}'$ be one of its reduced redaction classes. Then, $\forall \mathcal{C} \in \mathbb{D}$, $\mathcal{C}' \in \mathbb{D}'$, for any executions of $\mathcal{C}$ and $\mathcal{C}'$; $\Tilde{C}$ and $\Tilde{C'} $, respectively:
    \begin{align}
        \nu \big( \Tilde{C} \big) \geq \nu \big( \Tilde{C'} \big).
    \end{align}
    \item[E3:] The error in an execution of a simulation may depend on which redaction class a simulation is in, but is independent of which HQS within that class is being executed.
\end{itemize}

\subsection{Justification of the Error Model}

This error model is more general than most employed in, for instance, randomised benchmarking and Hamiltonian learning.
In particular, errors included in our model need not be time-independent or Markovian, and SPAM errors are not treated any differently~\cite{2020AnalogueBenchmarking}.
Our model includes physical error processes such as spontaneous emission, crosstalk, and particle loss.
It also captures both fast and slow noise processes such as laser fluctuations and temperature variations or degradation of device performance over implementations respectively~\cite{Shaffer2021}.
This is enabled by the forgoing the assumption of identically distributed errors, representing an advance over Ref.~\cite{jackson2023accreditation}.

We now justify each assumption listed in Sec.~\ref{sec:ErrorModelStatement}. 

\begin{enumerate}
    \item[E1:] Any map from and to density matrices is a CPTP map. Hence, any erroneous operation must be a CPTP map. 
    Let $\mathcal{U}$ denote a unitary operator and $\mathcal{A}$ denote the CPTP map representing an erroneous implementation of $\mathcal{U}$. Letting $\mathcal{A}' =  \mathcal{A} \mathcal{U}^{\dagger}$ and $\mathcal{A}'' = \mathcal{U}^{\dagger} \mathcal{A}$:
\begin{align}
    \label{eqn:justCPTPEquation}
    \mathcal{A}
    &=
    \mathcal{A} \mathcal{U}^{\dagger} \mathcal{U}
    =
    \mathcal{A}' \mathcal{U}\\
    &\hspace{1cm} \text{ and } \nonumber\\
    \label{eqn:justCPTPEquation2}
    \mathcal{A} 
    &=
    \mathcal{U} \mathcal{U}^{\dagger} \mathcal{A}
    =
    \mathcal{U} \mathcal{A}''.
\end{align}
Eqn.~\ref{eqn:justCPTPEquation} and Eqn.~\ref{eqn:justCPTPEquation2} each represent the correct implementation of $\mathcal{U}$ followed or preceded, respectively, by a CPTP map.

\item[E2:] The essence of E2 is that implementing a gate results in noise having a greater effect than if we had not implemented the gate. This follows from the empirical observation that a greater number of gates in a circuit typically results in a greater error occurring (as can be seen in Ref.~\cite[Fig.~1(b) and Fig.~7(a)]{2021Sams}).

\item[E3:] E3 is a well established assumption~\cite{Erhard2019, PhysRevA.92.060302, PhysRevLett.114.140505, PhysRevA.87.062119, PhysRevLett.106.230501, Dahlhauser_2021, PhysRevLett.109.070504, Harper2020, ferracin2022efficiently}. It is more often stated as: single-qubit gates experience gate-independent error. This
follows from single-qubit gates typically being the least error-prone components in most quantum hardware platforms~\cite{doi:10.1126/science.1145699, PhysRevLett.113.220501, Wright2019, Arute2019}. It is a standard assumption in the prior accreditation protocols~\cite{Ferracin_2019, Ferracin_2021, jackson2023accreditation}. E3 may be relaxed to allow for error that depends weakly on single-qubit gates~~\cite[Appendix 2]{2021Sams}.
\end{enumerate}

\subsection{A Simplification of Error Considerations}

Throughout this paper we make use of Lemma~\ref{simplerErrorLem}, akin to one proved in Ref.~\cite[Lemma 1]{jackson2023accreditation}.
Lemma~\ref{simplerErrorLem} is not proven here due to this similarity.

\begin{lemma}
\label{simplerErrorLem}    
Any HQS affected by errors obeying E1-E3 can be considered as if all single-qubit gates are error-free and all remaining error is independent of the single-qubit gates.
\end{lemma}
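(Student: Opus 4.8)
The plan is to exploit the freedom granted by E1 to split every noisy operation into its ideal part times a CPTP error, and then to use the gate-independence supplied by E3 to commute all single-qubit-gate errors through the analogue evolutions, collecting them into errors attached to the non-single-qubit operations. First I would invoke the decomposition in Eqn.~\ref{eqn:justCPTPEquation}, writing each noisy single-qubit gate as $\mathcal{E}_{\mathcal{G}}\,\mathcal{G}$, with $\mathcal{G}$ the ideal single-qubit unitary and $\mathcal{E}_{\mathcal{G}}$ a CPTP map, and likewise each noisy analogue slice as $\mathcal{E}_{\mathcal{T}}\,\mathcal{T}$, with $\mathcal{T} = e^{-i\mathcal{H}t'}$ the ideal (and hence unitary) evolution over the relevant slice. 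By E3 the error $\mathcal{E}_{\mathcal{G}}$ attached to a single-qubit gate depends only on that gate's position, i.e. on its redaction class, and not on which specific single-qubit operator $\mathcal{G}$ within the class is applied; this is exactly the gate-independence we need.

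Next I would move each single-qubit error to the left of the adjacent ideal analogue evolution by conjugation. For a segment of the form $\mathcal{E}_{\mathcal{T}}\,\mathcal{T}\,\mathcal{E}_{\mathcal{G}}\,\mathcal{G}$ I would write
\[
\mathcal{E}_{\mathcal{T}}\,\mathcal{T}\,\mathcal{E}_{\mathcal{G}}\,\mathcal{G}
= \big( \mathcal{E}_{\mathcal{T}}\,\mathcal{T}\,\mathcal{E}_{\mathcal{G}}\,\mathcal{T}^{\dagger} \big)\,\big( \mathcal{T}\,\mathcal{G} \big),
\]
where $\mathcal{T}\,\mathcal{E}_{\mathcal{G}}\,\mathcal{T}^{\dagger}$ denotes conjugation of the CPTP map $\mathcal{E}_{\mathcal{G}}$ by the unitary channel $\mathcal{T}$, which is again CPTP. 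The bracketed combined error is built only from $\mathcal{E}_{\mathcal{T}}$, $\mathcal{T}$, and $\mathcal{E}_{\mathcal{G}}$, none of which depend on the choice of $\mathcal{G}$, so it is independent of the single-qubit gates; meanwhile the ideal pair $\mathcal{T}\,\mathcal{G}$ has been freed. Iterating this commutation through the whole HQS sweeps every single-qubit-gate error into a product of gate-independent CPTP maps attached to the analogue evolutions, and — for gates adjacent to preparation or measurement — into the SPAM errors, which E1 places on exactly the same footing.

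After all commutations the ideal single-qubit unitaries $\mathcal{G}$ stand alone, so the single-qubit gates may be regarded as error-free, while all residual error is a composition of gate-independent CPTP maps and is therefore independent of the single-qubit gates, which is the claim. The role of E2 here is to guarantee that this rewriting is consistent as one passes between a redaction class and its reductions, so that the ideal-actual variation distance (Def.~\ref{def:idealActualVarDist}) relied upon downstream is unaffected by the bookkeeping.

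The main obstacle I anticipate is precisely that bookkeeping across an arbitrary interleaving of single-qubit layers, analogue slices, preparation, and measurement: I must verify that conjugating by each ideal evolution preserves both complete positivity and the gate-independence inherited from E3 at every step, and that errors pushed toward the measurement end remain genuine CPTP maps rather than demanding inverses of noisy, non-unitary operations. Because only the ideal evolutions — which are honestly unitary — are ever inverted in the conjugation, and because every map being transported is gate-independent by E3, these checks go through. This is the same structure exploited in Ref.~\cite[Lemma 1]{jackson2023accreditation}, which is why the present lemma follows by an essentially identical argument.
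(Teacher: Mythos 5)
Your proof is correct and takes essentially the same route as the paper, which in fact offers no proof of its own but defers to Ref.~\cite[Lemma 1]{jackson2023accreditation}: factor each noisy operation via E1 into its ideal unitary followed (or preceded) by a CPTP error, invoke E3 to make every single-qubit-gate error independent of which gate in the redaction class is applied, and conjugate those errors through the ideal (unitary) evolutions so that all residual noise collects into gate-independent CPTP maps while the single-qubit gates stand alone. The only inaccuracy is your closing claim about E2 ensuring consistency of the bookkeeping --- E2 is not needed for this lemma at all (it enters only later, via Corollary~\ref{cor:redactionClassesTrapsAndTargets} and Lemma~\ref{lem:twiceProbBound}, to compare trap and target error) --- but that is a harmless side remark, not a gap.
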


\section{Improved Analogue Accreditation Protocol}
\label{sec:protocolPresent}
This section presents our improved protocol for the accreditation of analogue quantum simulations.

\subsection{Defining Trap and Target Simulations}
\label{sec:defTrapAndTarg}
In presenting our trap and targets, we wish to inherit much of the analogue accreditation protocol in Ref.~\cite{jackson2023accreditation}. We do this via Lemma~\ref{lem:trapInherited}.
\begin{lemma}
    \label{lem:trapInherited}
    Assuming E1-3, given a subcircuit equivalent to the identity -- referred to hereafter as $\mathcal{J}(\mathcal{H}, t, \epsilon, 1)$ -- and another  -- referred to hereafter as $\mathcal{J}(\mathcal{H}, t, \epsilon, 0)$ -- that differs from the first only in its single-qubit gates (with $\mathcal{J}(\mathcal{H}, t, \epsilon, 1)$ having more non-identity single-qubit gates), but is equivalent to a time evolution, $e^{- i\mathcal{H}t}$, two simulations -- known as the trap\footnote{which includes the use of randomly chosen gates.} and the target simulations, respectively -- can be constructed such that:\\
    \underline{\emph{The trap simulation:}}
    \begin{enumerate}
        \item reduces all error occurring to stochastic Pauli error.
        \item detects Pauli error with probability at least $1/2$.
    \end{enumerate}
     \underline{\emph{The target simulations:}}
    \begin{enumerate}
        \item are equivalent to applying $e^{- i\mathcal{H}t}$ to a specified single-qubit product state before performing specified single-qubit measurements, $\mathcal{M}$.
        \item experiences error such that the ideal-actual (as defined in Def.~\ref{def:idealActualVarDist}) variation distance in executing it is always less than that in executing a trap. 
    \end{enumerate}
\end{lemma}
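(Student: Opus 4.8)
The plan is to build the target simulation directly from $\mathcal{J}(\mathcal{H}, t, \epsilon, 0)$ and the trap from $\mathcal{J}(\mathcal{H}, t, \epsilon, 1)$, then verify the four listed properties, inheriting the trap machinery of Ref.~\cite{jackson2023accreditation} for the trap properties and supplying a new, redaction-based argument for the trap--target error relation. For the target, I would prepare the specified single-qubit product state $\ket{\psi_0}$, apply $\mathcal{J}(\mathcal{H}, t, \epsilon, 0)$, and perform the measurements $\mathcal{M}$. Since $\mathcal{J}(\mathcal{H}, t, \epsilon, 0)$ is equivalent to $e^{-i\mathcal{H}t}$ (the analogue of Lemma~\ref{lem:EfisTimeEvolution} for $j=0$), the first target property is immediate.

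For the trap, I would prepare a product state in randomised single-qubit bases, apply $\mathcal{J}(\mathcal{H}, t, \epsilon, 1)$ with uniformly random single-qubit Pauli gates interleaved between its time-evolution segments and compensated so that the \emph{ideal} circuit is unchanged (and hence still equivalent to the identity), then measure in the matching bases. To establish trap property~1, I would invoke Lemma~\ref{simplerErrorLem} to treat all single-qubit gates as error-free with the residual error independent of them, and then run the standard Pauli-twirl argument: averaging the conjugation of each segment's error channel over the inserted random Paulis collapses it to a stochastic Pauli channel. The weakened (non-identically-distributed) error model is accommodated here precisely because, by E3, the error within a redaction class does not depend on which specific Pauli is inserted, so the twirl is well defined execution by execution. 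Trap property~2 then follows as in Ref.~\cite{jackson2023accreditation}: because the ideal trap returns the prepared string, any nontrivial stochastic Pauli anticommutes with the randomised measurement on at least one qubit, and so is flagged, with probability at least $1/2$.

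The genuinely new step is target property~2. Here I would argue that redacting all single-qubit gates places the trap in some redaction class $\mathbb{D}$, and that replacing the trap's ``extra'' (inverting and twirling) non-identity single-qubit gates by identities---possible exactly because trap and target differ only in single-qubit gates and the trap carries strictly more non-identity ones---exhibits the target as a member of a reduced redaction class $\mathbb{D}'$ of $\mathbb{D}$. Applying E2 to $\mathbb{D}$ and $\mathbb{D}'$ then yields $\nu\big(\Tilde{\mathcal{C}}_{\text{targ}}\big) \leq \nu\big(\Tilde{\mathcal{C}}_{\text{trap}}\big)$ for every pair of executions, which is exactly target property~2 and, crucially, avoids any assumption that the two executions suffer identically distributed error.

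I expect the main obstacle to be making the reduced-redaction-class identification watertight: one must check that the target's non-identity single-qubit gates sit at a subset of the positions of the trap's, so that the target really is reachable from the trap by the replace-then-redact operation defining a reduced redaction class, and that any preparation- and measurement-basis gates common to both are handled consistently. A secondary subtlety is confirming that the Pauli twirl and the $\geq 1/2$ detection bound survive run-to-run error fluctuations; this is where E3 (gate-independence within a class) together with E1 (independence of error across simulations) does the work formerly carried by the identically-distributed-error assumption of Ref.~\cite{jackson2023accreditation}.
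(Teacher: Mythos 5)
Your high-level plan coincides with the paper's: build the target from $\mathcal{J}(\mathcal{H}, t, \epsilon, 0)$ and the trap from $\mathcal{J}(\mathcal{H}, t, \epsilon, 1)$, inherit the Pauli-twirl and Hadamard-detection machinery of Ref.~\cite{jackson2023accreditation} for the two trap properties, and handle the trap--target error comparison via redaction classes. Your redaction-based argument for target property~2 is sound and is in fact what the paper formalises in Corollary~\ref{cor:redactionClassesTrapsAndTargets} and exploits (together with E2) in Lemma~\ref{lem:twiceProbBound}; the paper's own proof sketch of Lemma~\ref{lem:trapInherited} instead defers this step to Ref.~\cite[Lemma 2]{jackson2023accreditation}, so on this point your version is, if anything, more self-contained and better aligned with the non-identically-distributed error model.

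The genuine gap is in your trap construction. You propose inserting uniformly random Pauli gates \emph{between the time-evolution segments} of $\mathcal{J}(\mathcal{H}, t, \epsilon, 1)$, ``compensated so that the ideal circuit is unchanged.'' For a general spin Hamiltonian this compensation does not exist within the hybrid model: $\mathcal{P}\, e^{-i\mathcal{H}\delta t}\, \mathcal{P} = e^{-i(\mathcal{P}\mathcal{H}\mathcal{P})\delta t} \neq e^{-i\mathcal{H}\delta t}$ unless $\mathcal{P}$ commutes with $\mathcal{H}$, and the propagated compensator $\mathcal{U}\mathcal{P}\mathcal{U}^{\dagger}$ (with $\mathcal{U}$ a segment of the evolution) is in general a many-qubit operator, which Def.~\ref{HQSDef} forbids. (The single-qubit gates that \emph{do} appear between segments, the elements of $\mathbb{G}'_{\mathcal{H}}$, deliberately change the evolution --- they implement the approximate inversion --- and cannot double as a twirl.) This is precisely why the trap must be identity-equivalent as a whole: the paper places the random Paulis and stochastic Hadamards only \emph{before and after} the entire block $\mathcal{J}(\mathcal{H}, t, \epsilon, 1)$ (Fig.~\ref{fig:TrapAndTestDepiction}, Algorithm~\ref{alg:TrapSimGen}). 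Since the block is ideally the identity, these gates cancel in the error-free case, and --- the key observation, step~2 of the paper's proof --- whatever operator is actually applied while implementing the block \emph{is} the error operator, so conjugation by the surrounding random Paulis (with single-qubit gate errors removed via Lemma~\ref{simplerErrorLem}) twirls it to a stochastic Pauli channel. Replacing your interleaved twirl with this boundary twirl repairs the argument; the remainder of your proposal then goes through.
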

\begin{proof}
   Instead of a full proof we sketch the overall approach:
\begin{enumerate}
    \item As in Fig.~\ref{fig:TrapAndTestDepiction}, in the traps, the protocol in Ref.~\cite{jackson2023accreditation} applies single-qubit gates: a uniformly random Pauli and a Hadamard gate, with probability $0.5$, on each qubit before and after $\mathcal{J}(\mathcal{H}, t, \epsilon, 1)$.
    \item $\mathcal{J}(\mathcal{H}, t, \epsilon, 1)$ is equivalent to the identity, so any operator applied when $\mathcal{J}(\mathcal{H}, t, \epsilon, 1)$ is implemented \emph{is} the error operator.
    \item The uniformly random Pauli gates reduce\footnote{as in Ref.~\cite[Appendix A]{Ferracin_2019}.} the error occurring in the implementation of $\mathcal{J}(\mathcal{H}, t, \epsilon, 1)$, and the single-qubit gates themselves, to stochastic Pauli error.
    \item The stochastic Hadamard gates mean any Pauli error is detectible with probability at least $0.5$~\cite[Lemma 3]{jackson2023accreditation}, as the traps give a single definite outcome if there is no error~\cite[Lemma 6]{jackson2023accreditation}.
    \item Multiple runs of the trap simulation estimate the probability of error occurring and Ref.~\cite[Lemma 8]{jackson2023accreditation} provide a bound on the ideal-actual variation distance in traps.
    \item By Ref.~\cite[Lemma 2]{jackson2023accreditation}, the ideal-actual variation distance is greater in traps than in the target simulation.
\end{enumerate}
\end{proof}
\begin{corollary}
    \label{cor:redactionClassesTrapsAndTargets}
    In any accreditation protocol using Lemma~\ref{lem:trapInherited} to generate trap and target simulations\footnote{along with specified $\mathcal{J}(\mathcal{H}, t, \epsilon, 0)$ and $\mathcal{J}(\mathcal{H}, t, \epsilon, 1)$.}, the trap simulations are all in the same redaction class and the target simulation is in a reduced redaction class of that redaction class.
\end{corollary}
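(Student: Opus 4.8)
The plan is to prove the two assertions separately, relying throughout on the fact that the trap and target simulations produced by Lemma~\ref{lem:trapInherited} are built from a common template and differ \emph{only} in their single-qubit gates. Recall the operative definitions: a HQS is redacted when every non-identity single-qubit gate has its position fixed but its operator hidden (Def.~\ref{def:redactedGates}), two HQSs lie in the same redaction class precisely when redacting their single-qubit gates yields the same redacted HQS, and a reduced redaction class of $\mathbb{D}$ is obtained by taking some $\mathcal{C} \in \mathbb{D}$, replacing a subset of its non-identity single-qubit gates with identity gates, and redacting the rest.

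First, for the traps. Every trap produced by Lemma~\ref{lem:trapInherited} applies single-qubit gates at exactly the same locations: a uniformly random Pauli and a probabilistic Hadamard on each qubit before and after $\mathcal{J}(\mathcal{H}, t, \epsilon, 1)$, together with the uniformly chosen elements of $\mathbb{G}'_{\mathcal{H}}$ sitting inside $\mathcal{J}(\mathcal{H}, t, \epsilon, 1)$. Across distinct traps only the \emph{values} of these gates change, never their positions, since $\mathcal{H}$, $t$, $\epsilon$, and hence the gate-slot layout of $\mathcal{J}(\mathcal{H}, t, \epsilon, 1)$, are fixed. I would therefore redact every non-identity single-qubit gate of an arbitrary trap and observe that, because the positions coincide, the \emph{same} redacted HQS $\mathcal{R}$ arises from every trap. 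Each trap is recoverable from $\mathcal{R}$ by removing redactions, so each lies in the redaction class of $\mathcal{R}$; by the definition of lying in the same redaction class, all traps therefore do.

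Second, for the target. Let $\mathbb{D}$ be the redaction class of $\mathcal{R}$ and fix any trap $\mathcal{C} \in \mathbb{D}$. The target is assembled from the same template but substitutes $\mathcal{J}(\mathcal{H}, t, \epsilon, 0)$ for $\mathcal{J}(\mathcal{H}, t, \epsilon, 1)$ and fixes the outer single-qubit layers to the specified preparation and measurement rather than randomizing them. By the hypothesis of Lemma~\ref{lem:trapInherited}, inherited from Theorem~\ref{thm:alwaysInvertible}, the subcircuits $\mathcal{J}(\mathcal{H}, t, \epsilon, 0)$ and $\mathcal{J}(\mathcal{H}, t, \epsilon, 1)$ differ \emph{only} in their single-qubit gates, with the latter carrying strictly more non-identity ones. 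I would designate as the identity-replaceable subset exactly those surplus positions and then redact every remaining single-qubit slot; by the definition of a reduced redaction class this yields a reduced redaction class of $\mathbb{D}$, and the target lies in it, since at the surplus positions it carries identity gates while at every other slot its specified gate is recovered by removing the redaction.

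The main obstacle I anticipate is the bookkeeping certifying the \emph{position-level} correspondence between trap and target: one must verify that every single-qubit slot occupied by the target is also a slot in the trap, so that passing from $\mathcal{C}$ to the target is genuinely the ``replace a subset with identity, then redact'' operation of the reduced-redaction-class definition and not a more general edit. This reduces to checking that the state-preparation and measurement layers occupy identical positions in both, and that $\mathcal{J}(\mathcal{H}, t, \epsilon, 0)$ and $\mathcal{J}(\mathcal{H}, t, \epsilon, 1)$ share a common gate-slot layout with the latter's extra gates being precisely the identity-replaceable surplus --- both of which follow from Theorem~\ref{thm:alwaysInvertible}'s guarantee that the two subcircuits are obtained from one another \emph{purely by adding single-qubit gates}.
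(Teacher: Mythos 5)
Your proposal is correct and takes essentially the same approach as the paper, which states this corollary without an explicit proof, treating it as immediate from the construction in Lemma~\ref{lem:trapInherited}: all traps share one gate-slot layout and differ only in the \emph{values} of their single-qubit gates, while the target replaces the surplus gates inside $\mathcal{J}(\mathcal{H}, t, \epsilon, 1)$ with identities and fixes the outer preparation/measurement layers, which is exactly the ``replace a subset with identity, then redact'' operation defining a reduced redaction class. Your position-level bookkeeping simply makes explicit what the paper leaves implicit.
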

In light of Lemma~\ref{lem:trapInherited}, the rest of Sec.~\ref{sec:defTrapAndTarg} is devoted to constructing $\mathcal{J}(\mathcal{H}, t, \epsilon, 0)$ and $\mathcal{J}(\mathcal{H}, t, \epsilon, 1)$.

\begin{definition}
\label{def:vanishBlockDef}
A modified $j$-vanishing block is a subcircuit of the form in Fig. \ref{fig:vanishingBlockDef}, where $t_1'$ and $t_2'$ are related as in Def.~\ref{def:t1Primet2Prime}. 
Due to Theorem~\ref{thm:alwaysInvertible}, a modified $j$-vanishing block exists for any Hamiltonian and $j \in \{0, 1\}$.
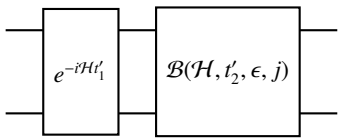
\begin{figure}[h]
    \centering
\begin{center}
\begin{quantikz}
% First qubit
 & \gate[wires=2]{e^{-i \mathcal{H}t_1'}}  & \gate[wires=2]{\mathcal{B}(\mathcal{H}, t'_2, \epsilon, j)} & \qw\\
% Second qubit
 &&& \qw
\end{quantikz}
\end{center}
    \caption{Depiction of a modified $j$-vanishing block, $\mathcal{J}(\mathcal{H}, t_1' + t_2', \epsilon, j)$, where $\mathcal{H}$ is an arbitrary spin Hamiltonian, $ t_1', t_2' \in \mathbb{R}$, $\mathcal{B}(\mathcal{H}, t'_2, \epsilon, j)$ is as in Fig.~\ref{fig:basecircuitInversionSubcircuit}, and $j \in \{ 0, 1\}$.}
    \label{fig:vanishingBlockDef}
\end{figure}
\end{definition}
Def.~\ref{def:vanishBlockDef} has made use of Def.~\ref{def:t1Primet2Prime} in splitting a
time evolution (of duration $t \in \mathbb{R}$) into two time evolutions: the first of duration $t_1' \in \mathbb{R}$ and the second of duration $t_2' \in \mathbb{R}$.
\begin{definition}
\label{def:t1Primet2Prime}
    For any given $t \in \mathbb{R}$ and $L \in \mathbb{N}$, let
    \begin{align}
        t'_1  = \dfrac{t}{L}
        \text{ and }
        t'_2 = \dfrac{(L - 1)t}{L}.
    \end{align}
\end{definition}
The most important property of modified vanishing blocks -- presented in Lemma~\ref{lem:IdealJLemma} -- follows from two properties of $t'_1$ and $t_2'$, proven in Lemma~\ref{lem:t1t2Sum}.
\begin{lemma}
    \label{lem:t1t2Sum}
    If $t_1', t_2' \in \mathbb{R}$ are constructed as prescribed in Def.~\ref{def:t1Primet2Prime} using any given $t \in \mathbb{R}$ and  $ L \in \mathbb{N}$, then:
        \begin{enumerate}
            \item $t'_1 + t'_2 = t$,
            \item $t'_1 - \dfrac{t'_2}{L - 1} = 0$.
        \end{enumerate}
\end{lemma}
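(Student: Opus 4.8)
The plan is to establish both identities by directly substituting the closed forms $t_1' = t/L$ and $t_2' = (L-1)t/L$ supplied by Def.~\ref{def:t1Primet2Prime} and then simplifying. Since $L$ is a fixed element of $\mathbb{N}$ and $t$ a fixed real, no induction or structural argument is needed---only elementary fraction arithmetic---so the entire proof reduces to two short computations.

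For the first claim I would place both terms over the common denominator $L$, so that the numerator collapses to $t + (L-1)t = Lt$:
\begin{align}
    t_1' + t_2' = \frac{t}{L} + \frac{(L-1)t}{L} = \frac{t + (L-1)t}{L} = \frac{Lt}{L} = t.
\end{align}
For the second claim I would substitute and observe that the factor $(L-1)$ in the numerator of $t_2'$ cancels exactly against the explicit $(L-1)$ in the denominator, leaving two identical terms:
\begin{align}
    t_1' - \frac{t_2'}{L-1} = \frac{t}{L} - \frac{1}{L-1}\cdot\frac{(L-1)t}{L} = \frac{t}{L} - \frac{t}{L} = 0.
\end{align}

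The computations themselves are routine; the only point demanding attention---and hence the main (if modest) obstacle---is the well-definedness of the quotient $t_2'/(L-1)$ in the second identity, which is ill-defined when $L = 1$. I would dispose of this by recording that the setting in which this lemma is invoked always has $L = \big\vert \mathbb{G}'_{\mathcal{H}} \big\vert \geq 2$: inverting any Hamiltonian carrying at least one nontrivial term requires an anticommuting set $\mathbb{G}'_{\mathcal{H}}$ of size at least two (cf. Def.~\ref{def:G} and Def.~\ref{def:G'}), and the approximate inversion target $e^{i\mathcal{H}t/(L-1)}$ appearing in Theorem~\ref{thm:alwaysInvertible} is itself only meaningful for $L \geq 2$. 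Under this standing condition the division is legitimate and both identities hold as stated.
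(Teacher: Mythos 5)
Your proof is correct and takes essentially the same approach as the paper's: direct substitution of $t_1' = t/L$ and $t_2' = (L-1)t/L$ from Def.~\ref{def:t1Primet2Prime}, followed by elementary fraction arithmetic. Your added remark that $L \geq 2$ is needed for $t_2'/(L-1)$ to be well-defined is a sensible observation the paper leaves implicit, but it does not alter the argument.
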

\begin{proof}Lemma~\ref{lem:t1t2Sum} follows from substituting in the definitions of $t'_1$ and $t'_2$ from Def.~\ref{def:t1Primet2Prime}:
\begin{enumerate}
        \item $t'_1 + t'_2
        =
        \dfrac{t}{L} + \dfrac{(L - 1)t}{L}
        =
        \dfrac{(L - 1 + 1)t}{L}
        =
        t$,
        \item $t'_1 - \dfrac{t'_2}{L - 1} 
        =
        \dfrac{t}{L} - \dfrac{1}{L - 1} \dfrac{(L - 1)t}{L}
        =
        \dfrac{t}{L} -  \dfrac{t}{L}
        =
        0$.
        \end{enumerate}
\end{proof}
\begin{lemma}
\label{lem:IdealJLemma}
    The errorless implementation of a modified $j$-vanishing block, $\mathcal{J}(\mathcal{H}, t, \epsilon, j)$, is equivalent (but only up to additive error $\epsilon$ in the diamond norm when $j = 1$) to $e^{-i (1 - j) \mathcal{H}t}$, where $t \in \mathbb{R}$ is as can be inferred from $t_1'$ and $t_2'$ in Def.~\ref{def:t1Primet2Prime}.
\end{lemma}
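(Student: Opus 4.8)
The plan is to split the argument according to the two values of $j \in \{0,1\}$ and, in each case, collapse the two-piece vanishing block into a single time evolution by composing its constituents. First I would read off from Def.~\ref{def:vanishBlockDef} and Fig.~\ref{fig:vanishingBlockDef} that, as operators acting on states (so read right to left), the modified $j$-vanishing block is the composition $\mathcal{B}(\mathcal{H}, t_2', \epsilon, j)\, e^{-i\mathcal{H}t_1'}$, where $t_1' = t/L$ and $t_2' = (L-1)t/L$ are fixed by Def.~\ref{def:t1Primet2Prime}. The crucial structural fact I would exploit throughout is that every factor is generated by the same $\mathcal{H}$, so the two exponentials commute and combine additively in the exponent.

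For $j = 0$, I would invoke Lemma~\ref{lem:EfisTimeEvolution} to replace $\mathcal{B}(\mathcal{H}, t_2', \epsilon, 0)$ exactly with $e^{-i\mathcal{H}t_2'}$, giving $e^{-i\mathcal{H}t_2'} e^{-i\mathcal{H}t_1'} = e^{-i\mathcal{H}(t_1'+t_2')}$. Part~1 of Lemma~\ref{lem:t1t2Sum} then collapses the exponent to give $e^{-i\mathcal{H}t}$, which is exactly $e^{-i(1-0)\mathcal{H}t}$, with no approximation incurred. For $j = 1$, I would use Theorem~\ref{thm:inversioWorks} to assert that $\mathcal{B}(\mathcal{H}, t_2', \epsilon, 1)$ lies within additive diamond-norm error $\epsilon$ of $e^{i\mathcal{H}t_2'/(L-1)}$. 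Composing this ideal target with the exact $e^{-i\mathcal{H}t_1'}$ and again merging the same-Hamiltonian exponentials yields $e^{-i\mathcal{H}(t_1' - t_2'/(L-1))}$, which part~2 of Lemma~\ref{lem:t1t2Sum} sends to $e^{0} = \mathcal{I} = e^{-i(1-1)\mathcal{H}t}$, as claimed.

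To transfer the $\epsilon$ bound from the single sub-circuit $\mathcal{B}$ to the whole block in the $j = 1$ case, I would appeal to the unitary invariance of the diamond norm: the difference between the block and its ideal target is $\big(\mathcal{B}(\mathcal{H}, t_2', \epsilon, 1) - e^{i\mathcal{H}t_2'/(L-1)}\big)$ composed on the right with the fixed unitary channel $e^{-i\mathcal{H}t_1'}$, and right-composition with a unitary channel preserves the diamond norm. Hence the distance of the full block from $e^{-i(1-1)\mathcal{H}t} = \mathcal{I}$ is bounded by the same $\epsilon$ guaranteed for $\mathcal{B}$, and no error inflation occurs.

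The one point genuinely requiring care — and the step I expect to be the main (if modest) obstacle — is precisely this last error-propagation argument: verifying that composing the $\epsilon$-approximate sub-circuit with the exact time evolution $e^{-i\mathcal{H}t_1'}$ does not grow the diamond-norm error beyond $\epsilon$. This follows because right-multiplication by a unitary channel is an isometry for the diamond norm, so I would state it explicitly rather than leave it implicit. Everything else is an exact algebraic cancellation driven entirely by the two identities of Lemma~\ref{lem:t1t2Sum}.
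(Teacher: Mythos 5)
Your proposal is correct and follows essentially the same route as the paper's own proof: a case split on $j$, invoking Lemma~\ref{lem:EfisTimeEvolution} for $j=0$ and Theorem~\ref{thm:inversioWorks} for $j=1$, then collapsing the combined exponent via the two identities of Lemma~\ref{lem:t1t2Sum}. If anything yours is slightly tighter: you make explicit the diamond-norm error propagation (right-composition with the fixed unitary $e^{-i\mathcal{H}t_1'}$ preserves the $\epsilon$ bound), which the paper leaves implicit, and you use the correct factor $L-1$ where the paper's proof text contains an apparent typo, writing $e^{i\mathcal{H}t_2'/[L+1]}$ where Theorem~\ref{thm:inversioWorks} and Lemma~\ref{lem:t1t2Sum} require $e^{i\mathcal{H}t_2'/[L-1]}$.
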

\begin{proof}
    By Lemma~\ref{lem:EfisTimeEvolution}, when there is no error, $\mathcal{B}(\mathcal{H}, t_2', \epsilon, 0) = e^{-i \mathcal{H}t_2'}$, hence a $0$-vanishing block, $\mathcal{J}(\mathcal{H}, t, \epsilon, 0)$, may be expressed as:
    \begin{align}
         e^{-i \mathcal{H}t_1'}  e^{-i \mathcal{H}t_2'}
         =
         e^{-i \mathcal{H} (t_1' + t_2')}
         =
         e^{-i \mathcal{H} t}.
    \end{align}
    Likewise, by Theorem~\ref{thm:inversioWorks}, $\mathcal{B}(\mathcal{H}, t_2', \epsilon, 1) \approx e^{i \mathcal{H}t_2'/ [L+1]}$, hence a $1$-vanishing block, $\mathcal{J}(\mathcal{H}, t, \epsilon, 1)$, may be approximated as:
    \begin{align}
        e^{-i \mathcal{H}t_1'} e^{i \mathcal{H}t_2'/ [L+1]}
        =
        e^{-i \mathcal{H}(t_1' - t_2'/ [L+1])}
        =
        e^{0}
        =
        \mathcal{I}.
    \end{align}
\end{proof}
Due to Lemma~\ref{lem:IdealJLemma}, all the requirements of Lemma~\ref{lem:trapInherited} are met. This provides the required trap and target simulations using the results of Ref.~\cite{jackson2023accreditation}. 
These are depicted in Fig.~\ref{fig:TrapAndTestDepiction} [top] -- for the target simulation -- and Fig.~\ref{fig:TrapAndTestDepiction} [bottom] -- for the trap simulations, and formal algorithms for their construction are given in Algorithm~\ref{alg:TargSimGen} and Algorithm~\ref{alg:TrapSimGen}.
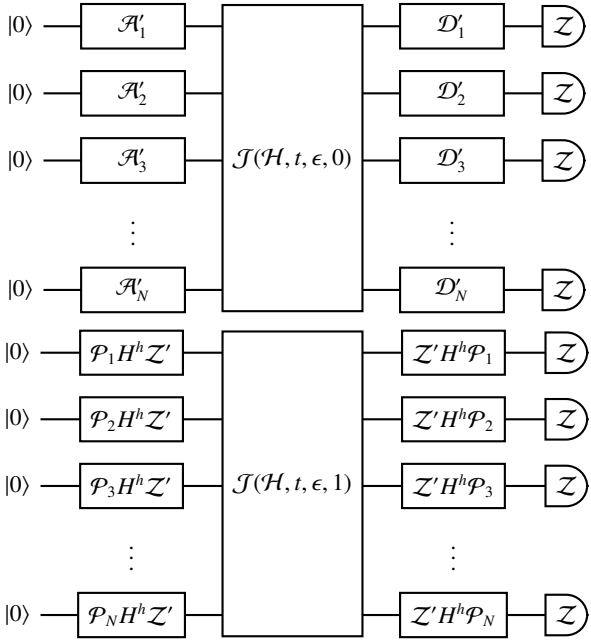
\begin{figure}[t]
    \centering
{\begin{quantikz}[row sep=0.3cm]
% First qubit
\lstick{$\ket{0}$} & \gate[][1.38cm]{\mathcal{A}^{\prime}_1} & \gate[wires=5, nwires=4]{\mathcal{J}(\mathcal{H}, t, \epsilon, 0)} & \gate[][1.38cm]{\mathcal{D}^{\prime}_1} & \meterD{\mathcal{Z}}\\
% Second qubit
\lstick{$\ket{0}$} & \gate[][1.38cm]{\mathcal{A}^{\prime}_2}  &  & \gate[][1.38cm]{\mathcal{D}^{\prime}_2}&  \meterD{\mathcal{Z}}\\
% Third qubit
\lstick{$\ket{0}$} & \gate[][1.38cm]{\mathcal{A}^{\prime}_3} &   &\gate[][1.38cm]{\mathcal{D}^{\prime}_3} & \meterD{\mathcal{Z}}\\
 & \vdots &  \vdots & \vdots & & \\
% Fourth qubit
\lstick{$\ket{0}$} & \gate[][1.38cm]{\mathcal{A}^{\prime}_N} &   & \gate[][1.38cm]{\mathcal{D}^{\prime}_N} & \meterD{\mathcal{Z}}
\end{quantikz}}
\qquad
{ \begin{quantikz}[row sep=0.3cm]
% First qubit
\lstick{$\ket{0}$} & \gate{\mathcal{P}_1 H^h \mathcal{Z}^{\prime}} & \gate[wires=5, nwires=4]{\mathcal{J}(\mathcal{H}, t, \epsilon, 1)} & \gate{\mathcal{Z}^{\prime} H^h \mathcal{P}_1} & \meterD{\mathcal{Z}}\\
% Second qubit
\lstick{$\ket{0}$} & \gate{\mathcal{P}_2 H^h \mathcal{Z}^{\prime}}  &  & \gate{\mathcal{Z}^{\prime} H^h \mathcal{P}_2}&  \meterD{\mathcal{Z}}\\
% Third qubit
\lstick{$\ket{0}$} & \gate{\mathcal{P}_3 H^h \mathcal{Z}^{\prime}} &   &\gate{\mathcal{Z}^{\prime} H^h \mathcal{P}_3} & \meterD{\mathcal{Z}}\\
 & \vdots & & \vdots &  & \\
% Fourth qubit
\lstick{$\ket{0}$} & \gate{\mathcal{P}_N H^h \mathcal{Z}^{\prime}} &  & \gate{\mathcal{Z}^{\prime} H^h \mathcal{P}_N} & \meterD{\mathcal{Z}}
\end{quantikz}}
    \caption{A target (top) and trap (bottom) simulation corresponding to a simulation according to Hamiltonian, $\mathcal{H}$, for a time duration $t$ applied to an initial state, $\ket{\psi_0}= \otimes_{j = 1}^N \big( \mathcal{A}^{\prime}_j \big) \ket{0}^{\otimes N}$, before the measurements $\mathcal{M}= \otimes_{j = 1}^N \big( \mathcal{D}^{\prime}_j \big) \mathcal{Z}^{\otimes N}$ are applied. $H$ denotes a Hadamard gate, $h \in \{0,1\}$ is a uniformly random bit, $\mathcal{Z}^{\prime}$ denotes applying a Pauli Z gate with probability $0.5$ (each instance in the circuit is independent), and $\mathcal{P}_j$ is a single-qubit Pauli gate, chosen uniformly at random independently for each $j$.}
    \label{fig:TrapAndTestDepiction}
\end{figure}
\begin{figure}
    \centering
\begin{algorithm}[H]
%\SetAlgoLined
$\mathbf{Input:}$ \\
$\bullet$ A Hamiltonian, $\mathcal{H}$\\
$\bullet$ A evolution duration, $t \in \mathbb{R}$\\
$\bullet$ An initial state, $\vert \psi_0 \rangle =  \otimes_{j = 1}^N \big( \mathcal{A}^{\prime}_j \big) \ket{0}^{\otimes N}$, where each $\mathcal{A}^{\prime}_j$ is a single-qubit gate\\ 
$\bullet$ A vector of measurements, $\mathcal{M}$, equivalent to applying the single-qubit gates, $\otimes_{j = 1}^N \big( \mathcal{D}^{\prime}_j \big)$, then measuring in the computational basis
\vspace{0.3cm}
    \begin{enumerate}
        \item Initialize a blank circuit, $\textit{circ} $
        \item Prepare the state $\vert 0 \rangle^{\otimes N}$ as the input state to $\textit{circ}$
        \item Apply $\otimes_{j = 1}^N \big( \mathcal{A}^{\prime}_j \big)$ to $\textit{circ}$
        \item Calculate $t_1'$ and $t_2'$, as defined in Def.~\ref{def:t1Primet2Prime}
        \item Apply $e^{-i\mathcal{H}t_1'}$ to $\textit{circ}$
        \item Apply $\mathcal{B}(\mathcal{H}, t_2', \epsilon, 0)$ to $\textit{circ}$, using Algorithm~\ref{alg:forwardTime}
        \item Apply $\otimes_{j = 1}^N \big( \mathcal{D}^{\prime}_j \big)$ to $\textit{circ}$
        \item Measure each qubit in the $Z$-basis
    \end{enumerate}
\vspace{0.1cm}
$\mathbf{Return}:$ $\textit{circ}$
\caption{Target simulation construction algorithm
 \label{alg:TargSimGen}}
\end{algorithm}
\end{figure}

\begin{Note}
    In Algorithm~\ref{alg:TrapSimGen}, $H$ is used to denote a Hadamard gate. It is not written in calligraphic letters (i.e. as $\mathcal{H}$) to avoid confusion with a Hamiltonian.\\
    Additionally, $H^h$ denotes $H$ if $h=1$ and $\mathcal{I}$ if $h=0$. 
\end{Note}

\begin{figure}
    \centering
\begin{algorithm}[H]
%\SetAlgoLined
$\mathbf{Input:}$ \\
$\bullet$ A Hamiltonian, $\mathcal{H}$\\
$\bullet$ A evolution duration, $t \in \mathbb{R}$\\
$\bullet$ A number of qubits $\mathcal{H}$ acts on , $N \in \mathbb{N}$\\ 
$\bullet$ A maximum permissible additive error -- in terms of the diamond norm -- in the approximate time inversion, $\epsilon \in \mathbb{R}$
\vspace{0.3cm}
    \begin{enumerate}
        \item Initialize a blank circuit, $\textit{circ} $
        \item Prepare the state $\vert 0 \rangle^{\otimes N}$ as the input state to $\textit{circ}$
        \item Choose $h \in \{ 0,1 \}$ uniformly at random
        \item For qubit $j$:
        \begin{enumerate}
            \item With probability $0.5$, apply a Pauli $\mathcal{Z}$ gate to qubit $j$
            \item Choose Pauli gate $\mathcal{P}_j$ uniformly at random
            \item Apply $\mathcal{P}_jH^h$ to qubit $j$
        \end{enumerate}
        \item Calculate $t_1'$ and $t_2'$, as defined in Def.~\ref{def:t1Primet2Prime}
        \item Apply $e^{-i\mathcal{H}t_1'}$ to $\textit{circ}$
        \item Apply $\mathcal{B}(\mathcal{H}, t_2', \epsilon, 1)$ to $\textit{circ}$, with additive error in the inverted time evolution bounded by $\epsilon$, using Algorithm~\ref{alg:addingSinglesForInvert}
        \item For qubit $j$:\\
        \begin{enumerate}
            \item Apply $H^h\mathcal{P}_j$ to qubit $j$
            \item With probability $0.5$ apply a Pauli $\mathcal{Z}$ gate to qubit $j$
        \end{enumerate}
        \item Measure each qubit in the $\mathcal{Z}$-basis
    \end{enumerate}
\vspace{0.1cm}
$\mathbf{Return}:$ $\textit{circ}$
\caption{Trap simulation construction algorithm
 \label{alg:TrapSimGen}}
\end{algorithm}
\end{figure}

\subsection{From Trap and Target Simulations to an Accreditation Protocol}
\label{sec:Trap2Targ}
\begin{lemma}
    \label{lem:twiceProbBound}
    Assuming E1-3, doubling the average probability of error over a set of trap simulations -- that are generated as in Lemma~\ref{lem:trapInherited} -- provides an upper bound on the ideal-actual variation distance of the corresponding target simulation.
\end{lemma}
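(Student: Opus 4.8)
The plan is to establish Lemma~\ref{lem:twiceProbBound} by chaining together the bound on the trap's ideal-actual variation distance with the ordering guaranteed by the error model and Lemma~\ref{lem:trapInherited}. First I would invoke the structural result of Lemma~\ref{lem:trapInherited}, which already guarantees (i) that the traps reduce all error to stochastic Pauli error, (ii) that such Pauli error is detected with probability at least $1/2$, and (iii) that the ideal-actual variation distance of the target is bounded above by that of a trap. The quantity we must relate to $\nu(\tilde{\mathcal{C}})$ is therefore the trap's variation distance, and the goal is to show it is at most twice the average probability that a trap flags an error.

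The core step I would carry out is the following. Let $p_{\text{err}}$ denote the average probability, over the random choices in Algorithm~\ref{alg:TrapSimGen} and over the noise, that a single trap run produces an outcome other than its unique error-free outcome (guaranteed unique by the cited \cite[Lemma 6]{jackson2023accreditation}). Because the stochastic Pauli error is detected with probability at least $1/2$ (item 2 of the trap properties in Lemma~\ref{lem:trapInherited}), the probability that the trap \emph{fails to flag} but error nonetheless occurred is controlled: any undetected error contributes to the variation distance, but the detection probability being at least $1/2$ means the total weight of erroneous (Pauli) events is at most $2 p_{\text{err}}$. Hence the trap's ideal-actual variation distance $\nu(\tilde{\mathcal{C}}_{\text{trap}})$, which by Def.~\ref{def:idealActualVarDist} is the total variation distance between the ideal single-outcome distribution and the actual outcome distribution, satisfies
\begin{equation}
    \nu \big( \tilde{\mathcal{C}}_{\text{trap}} \big) \leq 2 \, p_{\text{err}}.
\end{equation}
This is essentially the content imported via \cite[Lemma 8]{jackson2023accreditation} as flagged in the proof sketch of Lemma~\ref{lem:trapInherited}.

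It then remains to transfer this bound from the trap to the target. Here I would appeal directly to item 2 of the target properties in Lemma~\ref{lem:trapInherited}: the ideal-actual variation distance in executing the target is always at most that in executing a trap. Crucially, this is exactly where the new error model does its work: by Corollary~\ref{cor:redactionClassesTrapsAndTargets} the target simulation lies in a reduced redaction class of the trap's redaction class, so assumption E2 yields $\nu(\tilde{\mathcal{C}}_{\text{target}}) \leq \nu(\tilde{\mathcal{C}}_{\text{trap}})$ \emph{without} assuming errors are identically distributed across runs. Combining the two displayed inequalities gives $\nu(\tilde{\mathcal{C}}_{\text{target}}) \leq 2 p_{\text{err}}$, which is precisely the claim once $p_{\text{err}}$ is identified with the average probability of error over the set of trap simulations.

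The main obstacle I anticipate is the transfer step rather than the factor-of-two counting. In Ref.~\cite{jackson2023accreditation} the analogous inequality relied on errors being identical across the trap and target executions, so one could literally compare the same error operator acting in two circuits; here we must instead justify $\nu(\tilde{\mathcal{C}}_{\text{target}}) \leq \nu(\tilde{\mathcal{C}}_{\text{trap}})$ purely through the redaction-class machinery of E2 and E3. The delicate point is ensuring the target genuinely sits in a \emph{reduced} redaction class of the trap's class (so that E2's monotonicity applies), which is the role of Corollary~\ref{cor:redactionClassesTrapsAndTargets} and ultimately of the fact that $\mathcal{J}(\mathcal{H},t,\epsilon,1)$ and $\mathcal{J}(\mathcal{H},t,\epsilon,0)$ differ only in single-qubit gates. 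I would therefore spend the most care verifying that the extra non-identity single-qubit gates present in the trap but absent (replaced by identities) in the target are exactly what makes the target's class a reduction of the trap's, so that E2 can be invoked legitimately and the identically-distributed assumption is genuinely eliminated.
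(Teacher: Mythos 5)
Your proposal is correct and takes essentially the same route as the paper's proof: it bounds each trap's ideal-actual variation distance by twice its error probability via the inherited Ref.~\cite[Lemma 8]{jackson2023accreditation}, then transfers this bound to the target using Corollary~\ref{cor:redactionClassesTrapsAndTargets} and E2, exactly as the paper does. The only (presentational) difference is that the paper explicitly notes that, since the target's variation distance lies below \emph{each} trap's individually, it lies below the average over the whole set $\mathbb{K}$ of traps, whereas you fold that averaging step into the final identification of $p_{\text{err}}$ with the average error probability.
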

\begin{proof}
    Let $\mathbb{K} = \big \{ \mathcal{C}_j \big \}$ denote a set of trap simulations, indexed by $j$. As noted in Lemma~\ref{lem:trapInherited}, every individual trap detects any stochastic error occurring in it with probability at least $0.5$. Therefore, the probability of error occurring is at most twice the probability that an individual trap detects error. The error assumptions E1-3 allow different traps, in $\mathbb{K}$, to have different probabilities of error but define $\langle P \rangle_{\mathbb{K}}$ as the average probability of error over $\mathbb{K}$.
    Due to Lemma~\ref{lem:trapInherited}, the traps we use in this protocol inherit from Ref.~\cite[Lemma 8]{jackson2023accreditation} that: 
    $\forall \mathcal{C}_j \in \mathbb{K}$,
    \begin{align}
        \nu \big( \Tilde{\mathcal{C}}_j \big)
        \leq
        2 \textit{Prob}_e \big( \Tilde{\mathcal{C}}_j  \big),
    \end{align}
    where $\textit{Prob}_e$ is the probability error occurs in the argument execution.
    Therefore, using Corollary~\ref{cor:redactionClassesTrapsAndTargets} and E2:
    \begin{align}
         \nu \big( \Tilde{\mathcal{C}}_{\mathrm{targ}} \big)
         \leq
        \dfrac{1}{\vert \mathbb{K} \vert} \sum_{j = 1}^{\vert \mathbb{K} \vert} \bigg( \nu \big( \Tilde{\mathcal{C}}_j \big) \bigg)
        \leq
        \dfrac{1}{\vert \mathbb{K} \vert} \sum_{j = 1}^{\vert \mathbb{K} \vert} \bigg(
        2 \textit{Prob}_e \big( \Tilde{\mathcal{C}}_j  \big) \bigg)
        =
        2 \langle P \rangle_{\mathbb{K}},
    \end{align}
    where $\nu \big( \Tilde{\mathcal{C}}_{\mathrm{targ}} \big)$ is the ideal-actual variation distance (as in Eqn.~\ref{def:idealActualVarDist}) of the execution of the corresponding (to the traps) target simulation, and $\textit{Prob}_e$ is the probability of error occurring in its argument simulation execution.
\end{proof}
Through the use of $\bigg \lceil \dfrac{2}{\theta^2} \ln{\bigg( \dfrac{2}{1 - \alpha} \bigg)} \bigg \rceil + 1$ traps -- such as the set $\mathbb{K}$ in Lemma~\ref{lem:twiceProbBound} -- the average probability that a trap in $\mathbb{K}$ detects error, $\langle P \rangle_{\mathbb{K}}$, can be estimated to within additive error $\theta$, with confidence $\alpha$, as shown in Ref.~\cite[The Proof of Theorem~1]{jackson2023accreditation}.
This final technique completes the accreditation protocol, which is presented -- formally --  in Protocol~\ref{fullProtocolSketch}.

The main result of this paper, that the presented Protocol~\ref{fullProtocolSketch} functions as an accreditation protocol, as defined in Def.~\ref{AAQSDef}, is Theorem~\ref{finalTheorem}.

\begin{theorem}
\label{finalTheorem}
    Assuming E1-3, Protocol~\ref{fullProtocolSketch} performs accredited analogue simulation as per Def.~\ref{AAQSDef}, using $N_{\textrm{tr}}$ trap simulations, where
    \be
        N_{\mathrm{tr}} = \bigg \lceil \dfrac{2}{\theta^2} \ln{\bigg( \dfrac{2}{1 - \alpha} \bigg)} \bigg \rceil + 1.
    \ee
\end{theorem}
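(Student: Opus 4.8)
The plan is to verify the two obligations of Def.~\ref{AAQSDef} in turn: that the protocol returns the correct measurement outcomes, and that it returns a valid $\epsilon_{\mathrm{VD}}$ upper bounding $\nu(\Tilde{\mathcal{C}}_{\mathrm{targ}})$ with the stated accuracy $\theta$ and confidence $\alpha$. The first is immediate. The target simulation (Fig.~\ref{fig:TrapAndTestDepiction}[top], Algorithm~\ref{alg:TargSimGen}) is built from $\mathcal{J}(\mathcal{H}, t, \epsilon, 0)$, which by Lemma~\ref{lem:IdealJLemma} is \emph{exactly} $e^{-i\mathcal{H}t}$ (the approximation error $\epsilon$ enters only the $j=1$ block used in the traps), sandwiched between the state-preparation gates $\mathcal{A}'_j$ and the measurement gates $\mathcal{D}'_j$. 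Hence its outcomes are precisely those of the intended analogue simulation of Def.~\ref{AQSDef}, matching the first property of Lemma~\ref{lem:trapInherited}.

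The substance of the proof is the second obligation, which I would assemble from two ingredients. First, Lemma~\ref{lem:twiceProbBound} already supplies the deterministic inequality $\nu(\Tilde{\mathcal{C}}_{\mathrm{targ}}) \leq 2\langle P\rangle_{\mathbb{K}}$, where $\langle P\rangle_{\mathbb{K}}$ is the \emph{average} over the trap set $\mathbb{K}$ of the per-trap error-detection probabilities. Second, I would estimate $\langle P\rangle_{\mathbb{K}}$ empirically: each trap run yields an indicator $X_j \in \{0,1\}$ (error detected or not), and by E1 these indicators are independent. Crucially -- and this is exactly where dropping the identically-distributed assumption bites -- the $X_j$ need not share a common mean; their individual means are the per-trap detection probabilities, whose average is $\langle P\rangle_{\mathbb{K}}$. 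I would therefore invoke Hoeffding's inequality in its form for independent, \emph{not necessarily identically distributed}, bounded variables, giving $\mathrm{Prob}(|\hat{P} - \langle P\rangle_{\mathbb{K}}| \geq \theta') \leq 2\exp(-2 N_{\mathrm{tr}}\,\theta'^2)$ for the empirical frequency $\hat{P} = N_{\mathrm{tr}}^{-1}\sum_j X_j$.

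The sample count then follows by solving this tail bound. Because the quantity to be certified is $2\langle P\rangle_{\mathbb{K}}$ while the demanded overall accuracy is $\theta$, I would estimate $\langle P\rangle_{\mathbb{K}}$ to the tighter accuracy $\theta' = \theta/2$; substituting this into the exponent turns $1/(2\theta'^2)$ into the factor $2/\theta^2$. Requiring the failure probability $2\exp(-2 N_{\mathrm{tr}}\,(\theta/2)^2) \leq 1-\alpha$ and rearranging yields $N_{\mathrm{tr}} \geq \frac{2}{\theta^2}\ln\!\big(\frac{2}{1-\alpha}\big)$; taking the ceiling and the additional trap, as in Ref.~\cite[Proof of Theorem~1]{jackson2023accreditation}, gives the claimed $N_{\mathrm{tr}}$. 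The protocol then reports $\epsilon_{\mathrm{VD}} = 2\hat{P} + \theta$, which with confidence $\alpha$ satisfies $\nu(\Tilde{\mathcal{C}}_{\mathrm{targ}}) \leq 2\langle P\rangle_{\mathbb{K}} \leq \epsilon_{\mathrm{VD}}$, discharging Def.~\ref{AAQSDef}.

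The main obstacle is the non-identically-distributed error. In Ref.~\cite{jackson2023accreditation} all traps shared one error distribution, so a standard i.i.d.\ concentration argument sufficed and $\langle P\rangle_{\mathbb{K}}$ was simply ``the'' error probability. Here the real work is confirming that (i) the target-to-trap comparison still holds \emph{trap by trap} -- delivered by E2 together with Corollary~\ref{cor:redactionClassesTrapsAndTargets} inside Lemma~\ref{lem:twiceProbBound} -- and (ii) the empirical mean legitimately estimates the \emph{average} mean $\langle P\rangle_{\mathbb{K}}$ even though no single population mean exists, which is precisely what the non-identical form of Hoeffding provides. Keeping the factor-of-two bookkeeping straight between the estimand $\langle P\rangle_{\mathbb{K}}$, the certified quantity $2\langle P\rangle_{\mathbb{K}}$, and the target accuracy $\theta$ is the only other place where care is needed.
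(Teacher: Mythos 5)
Your proposal is correct and follows essentially the same route as the paper's proof: invoke Lemma~\ref{lem:twiceProbBound} (via E1--E3 and Corollary~\ref{cor:redactionClassesTrapsAndTargets}) for the deterministic bound $\nu(\Tilde{\mathcal{C}}_{\mathrm{targ}}) \leq 2\langle P \rangle_{\mathbb{K}}$, and Hoeffding's inequality, as in the proof of Theorem~1 of Ref.~\cite{jackson2023accreditation}, to justify that $N_{\mathrm{tr}}$ traps estimate $\langle P \rangle_{\mathbb{K}}$ to the required accuracy and confidence. The only difference is that you spell out what the paper delegates to the citation -- the non-identically-distributed form of Hoeffding (with independence supplied by E1) and the $\theta' = \theta/2$ bookkeeping that produces the factor $2/\theta^2$ -- which is a faithful elaboration rather than a different argument.
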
 
\begin{proof}
    Protocol~\ref{fullProtocolSketch} uses $\bigg \lceil \dfrac{2}{\theta^2} \ln{\bigg( \dfrac{2}{1 - \alpha} \bigg)} \bigg \rceil + 1$ trap simulations (which follows from Hoeffding's inequality~\cite{doi:10.1080/01621459.1963.10500830}) -- as per Ref.~\cite[The Proof of Theorem~1]{jackson2023accreditation} -- to estimate the average probability of error of those trap simulation executions to arbitrary additive error, $\theta$, with arbitrary confidence, $\alpha$. Due to Lemma~\ref{lem:twiceProbBound} (and assuming E1-3), doubling this estimate upper bounds the ideal-actual variation difference in the execution of the target simulation (as defined in Lemma~\ref{lem:trapInherited}) executed amongst the trap simulation (as defined in Lemma~\ref{lem:trapInherited}) executions in Protocol~\ref{fullProtocolSketch}. 
 Due to Theorem~\ref{thm:alwaysInvertible}, the required simulations can always be constructed and executed.
\end{proof}
% Full Algorithm Sketch
\begin{figure}
    \centering
\begin{algorithm}[H]
%\SetAlgoLined

% Changes the 'Algorithm' to a 'Protocol'
\SetAlgorithmName{Protocol}{protocol}{List of Protocols}

$\mathbf{Input:}$ \\
$\bullet$ A Hamiltonian, $\mathcal{H}$\\
$\bullet$ A real evolution time, $t \in \mathbb{R}$\\
$\bullet$ An initial state, $\vert \psi_0 \rangle$\\ 
$\bullet$ A vector of measurements for after the time evolution, $\mathcal{M}$\\
$\bullet$ A maximum permissible additive error -- in terms of the diamond norm -- in the approximate time inversion, $\epsilon \in \mathbb{R}$\\
$\bullet$ A desired confidence, $\alpha$\\
$\bullet$ A desired accuracy of the output bound, $\theta$\\
\vspace{0.3cm}
 \begin{enumerate}
     \item Calculate the required number of traps $N_{\textrm{tr}} = \bigg \lceil \dfrac{2}{\theta^2} \ln{\bigg( \dfrac{2}{1 - \alpha} \bigg)} \bigg \rceil + 1$
     \item Pick uniformly at random an integer between $1$ and $N+1$ to be the index of the target simulation 
     \item For $i=1$ to $N_{\textrm{tr}}+1$
     \begin{enumerate}
         \item If simulation $i$ is the target simulation:
            \begin{enumerate}
                \item Generate the target simulation using Algorithm~\ref{alg:TargSimGen}
                \item Execute the target simulation just generated
                \item Record the measurement outcomes
            \end{enumerate}
        \item If simulation $i$ is a trap simulation:
            \begin{enumerate}
                \item Generate a trap simulation using Algorithm~\ref{alg:TrapSimGen}
                \item Execute the trap simulation just generated
                \item Record if the trap gave the correct output
            \end{enumerate}
     \end{enumerate}
     \item Calculate   $\epsilon_{\text{VD}}
    = 
    2\dfrac{\text{Number of correct traps}}{N_{\textrm{tr}}+1}$
\end{enumerate}

$\mathbf{Return}:$ Target simulation measurement outcomes and $\epsilon_{\text{VD}}$.
\caption{Analogue Accreditation Protocol
 \label{fullProtocolSketch}}
\end{algorithm}
\end{figure}

\section{Establishing Quantum Advantage in Experiments}
\label{sec:quantumAdvantage}

An objective driving the development of verification protocols is the ambition to demonstrate quantum advantage~\cite{Preskill:2012tg, Harrow_2017}, as any experiment to demonstrate quantum advantage will require verification. This is because without verifying that the quantum device has correctly performed the task considered classically intractable, no advantage has been demonstrated. Therefore, we now discuss both the prospects for quantum advantage based on simulation and the potential utility of our analogue accreditation protocol for verifying that advantage.

The starting point in the search for quantum advantage must be complexity theory. 
Refs.~\cite{PhysRevX.8.021010,ringbauer2024verifiablemeasurementbasedquantumrandom} show that simulating the time evolution of the Hamiltonian $\mathcal{H}_A = \sum_{(i,j) \in \mathbf{E}} \big( J_{i,j} \mathcal{Z}_i \mathcal{Z}_j\big) - \sum_{i \in \mathbf{V}} \big( h_{i} \mathcal{Z}_i\big)$ -- where the graph $(\mathbf{E}, \mathbf{V})$ specifies a square lattice and $J_{i,j}, h_i \in \mathbb{R}$ (for all indices) -- is classically intractable if the ideal-actual variation distance of the simulation's execution, under certain complexity theoretic and other assumptions such as the average-case hard fraction of 1 in 1000, is less than 0.292~\cite{Fujii_2017}.
Ref.~\cite{liu2024efficientlyverifiablequantumadvantage} takes a similar approach to demonstrating advantage and provides further simplifications, including reducing the Hamiltonian to $\mathcal{H}_A = \sum_{(i,j) \in \mathbf{E}} \big( J_{i,j} \mathcal{Z}_i \mathcal{Z}_j\big)$. 
The latter is particularly amenable to the original accreditation protocol~\cite{jackson2023accreditation} is particularly well suited -- in some ways. The fact that this simulation maintains its classical intractability when experiencing error, provided the ideal-actual variation distance  (i.e. $\nu(\Tilde{\mathcal{C}})$, where $\Tilde{\mathcal{C}}$ is the execution of the simulation) of its execution remains below a threshold, makes it well suited to practical quantum advantage demonstrations as quantum simulators will inevitably experience noise in the near-term.

That the requirement for quantum advantage in the experiment proposed in Ref.~\cite{PhysRevX.8.021010} is expressed in terms of the ideal-actual variation distance additionally makes the analogue accreditation protocol presented herein well suited for use in verifying it, as our protocol returns exactly this metric about the quality of a simulation's execution. This experiment also requires the ability to prepare many different product states and measure in a variety of single-qubit bases, which may alternatively be expressed as requiring single-qubit gates.

Therefore, the only additional requirement for using our accreditation protocol in a quantum advantage demonstration based on the complexity results of Ref.~\cite{PhysRevX.8.021010}, beyond those required for the demonstration without verification, is the ability to pause the implementation of time evolutions in order to apply single-qubit gates.

\section{Discussion}
\label{discussionSec}
We have presented a protocol for accrediting the outputs of analogue simulations. It is an updated and improved version of the protocol presented in Ref.~\cite{jackson2023accreditation}: the key improvements being the ability to invert arbitrary spin Hamiltonians -- without using universal Hamiltonians~\cite{zhou2021strongly} -- and relaxing the assumption that simulations all experience identical error. Additionally, the resource requirements of our protocol remain reasonable: the total duration of time evolutions in trap and target simulations are exactly as if the evolution were performed without accreditation, and the number of additional single-qubit gates scale linearly in all input parameters; no extra qubits are required; the number of trap simulations required is quadratic in the reciprocal of the required accuracy of the bound on the variation distance the protocol outputs. Consequently, our protocol can be implemented on extant programmable hybrid analogue-digital quantum simulators~\cite{Browaeys2020,Bluvstein2022}.

Our work leaves several potential avenues for further improvement. While we believe the assumption, in E1, that error is represented by CPTP maps to be well-justified, the independence of the CPTP maps in different simulations is suspect and hence it may be beneficial to either develop methods to remove this assumption, or confirm that this independence assumption holds.
An easy improvement -- that has already been addressed~\cite[Appendix 2]{2021Sams} -- is that E3 may be relaxed to allow for error that depends weakly on single-qubit gates. The results shown in Ref.~\cite{2021Sams} regarding this directly transfer to apply to this protocol.

We do not believe that E2 may be substantially relaxed without significant changes to the rest of the protocol. 

\section{Acknowledgements}
We thank Peter Zoller, Andrew Daley, Jonathan Pritchard, and Theodoros Kapourniotis for useful conversations.
This work was supported, in part, by a EPSRC IAA grant (G.PXAD.0702.EXP), the UKRI ExCALIBUR project QEVEC (EP/W00772X/2),
and the Hub for Quantum Computing via Integrated and Interconnected Implementations (QCI3) (EP/Z53318X/1).

\bibliography{References}

%%%%%%%%%%%%%%%%%%%%%%%%%%%% APPENDIX %%%%%%%%%%%%%%%%%%%%%

\onecolumngrid
\appendix
\section{Proof of Theorem~\ref{thm:inversioWorks}}
\label{sec:AppendixProofOfTheorem:inversionWorks}
\subsection{Basic Definitions in Graph Theory}
To enable the discussion in this appendix, we first present foundational definitions of graphs and hypergraphs.
\begin{definition}
    \label{def:hypergraph}
    A \underline{hypergraph}, $h = (V, E)$, is a set of two sets: $V$ and $E$.
    Elements of $V$ are referred to as vertices and each element in $E$ (referred to as a hyperedge) is a subset of $V$. If each element in $E$ is a double of elements in $V$, they are referred to as edges and the hypergraph is a \underline{graph}. 
\end{definition}
An important concept of this appendix is colouring a graph.
\begin{definition}
\label{def:vertexColour}
\label{def:Colouring}
    For any hypergraph, $h = (\mathbf{V}_{\textit{h}}, \mathbf{E}_{\textit{h}})$, a vertex is \underline{coloured} when assigned an element of $\mathbb{N}$. The element of $\mathbb{N}$ a vertex is assigned is referred to as its \underline{colour}.
    A \underline{colouring} of a hypergraph is a mapping, $\chi$: $V \longrightarrow$ $\{1, 2, ..., k \in \mathbb{N}\}$, such that:
    \begin{align}
        &\forall e \in \mathbf{E}_{\textit{h}},
        \forall v_1 \in e,  v_2 \not = v_1\in e,
        \chi(v_1) \neq  \chi(v_2).
    \end{align}
\end{definition}
\begin{definition}
\label{kColDef}
    A hypergraph, $\textit{h} = (\mathbf{V}_{\textit{h}}, \mathbf{E}_{\textit{h}})$, is \underline{k-colourable} if and only if there exists a colouring mapping, $\chi$: $\mathbf{V}_{\textit{h}} \longrightarrow$ $\{1, 2, ..., k\}$, i.e. a colouring (as in Def.~\ref{def:Colouring}) with the image $\{1, 2, ..., k\}$ so it assigns each vertex of $\textit{h}$ one of $k$ colours.
\end{definition}
\begin{definition}
\label{cromatSetDef}
Define the \underline{chromatic subsets} for a hypergraph $h$, corresponding to a given colouring, $\chi$, by:
\begin{align}
    \Tilde{\mathbb{V}}_j &= \big \{ v \in \mathbf{V}_{\textit{h}} \text{ }\vert \text{ } \chi(v) = j \big \},
\end{align}
where the underlying hypergraph, $\textit{h}$, has been k-coloured for some integer $k \geq j$. The \underline{chromatic number} of a hypergraph is the number of chromatic subsets in the colouring of the hypergraph with the minimum number of chromatic subsets. For a hypergraph, $\textit{h}$, we denote this by $C_{\textit{rom}} \big( \textit{h} \big)$.
\end{definition}
\begin{definition}
    The \underline{maximum vertex degree} of a hypergraph is the maximum number number of hyperedges including a specific vertex in that graph. For a hypergraph, $\textit{h}$, we denote this by $\Delta \big( \textit{h} \big)$.
\end{definition}

\subsection{Interaction Hypergraph Preparatory Definition and Results}
\begin{definition}
    We define the \underline{greedy colouring} of a graph as the colouring of that graph generated by Algorithm~\ref{alg:greedyColour}
    \begin{figure}
    \centering
\begin{algorithm}[H]
%\SetAlgoLined
$\mathbf{Input:}$ \\
$\bullet$ A graph, $\mathbf{G} = (\mathbf{V}, \mathbf{E})$\\
\vspace{0.3cm}
\For{$v$ in $\mathbf{V}$}{
    \begin{enumerate}
        \item unUsed = $\mathbb{N}$
        \item \For{every vertex, $k$, sharing a hyperedge with $v$}{
        \begin{enumerate}
            \item \If {$k$ is already coloured}{
                \begin{enumerate}
                    \item Remove the colour of $k$ from unUsed 
                \end{enumerate}
            }
        \end{enumerate}
        }
        \item Assign to $v$ the minimum value in unUsed as its colour
    \end{enumerate}
    }
\vspace{0.1cm}
$\mathbf{Return}:$ The colouring constructed above
\caption{Greedy algorithm for colouring a graph
 \label{alg:greedyColour}}
\end{algorithm}
\end{figure}
\end{definition}
We then begin the journey towards proving Theorem~\ref{thm:inversioWorks} with some relevant graph theory to represent the interactions in the Hamiltonian being considered. This notion is formalized in Def.~\ref{def:hypergraphRepresent}.
\begin{definition}
\label{def:hypergraphRepresent}
For any Hamiltonian acting on a system of $N$ spins, interactions between the spins that occur in the Hamiltonian, can be represented by a hypergraph, $\big ( \mathbf{V}, \mathbf{E} \big)$, known as the \underline{hypergraph representing the Hamiltonian}. $\mathbf{V}$ is a set of each of the indices labelling a spin. $\mathbf{E}$ defines the edges in the hypergraph and is a set that has a single element for each term in the Hamiltonian. For each term in the Hamiltonian, the corresponding element of $\mathbf{E}$ is a set of the indices of all the spins which that term acts non-trivially on.
\end{definition}
It is useful to establish that finite hypergraphs are always finitely colourable, in Lemma~\ref{lem:colourHypergraph}.
\begin{lemma}
    \label{lem:colourHypergraph}
    For any finite hypergraph,  $\textit{h} = (\mathbf{V}_{\textit{h}}, \mathbf{E}_{\textit{h}})$, there exists some finite $k \in \mathbb{N}$ such that $h$ is $k$-colourable.
\end{lemma}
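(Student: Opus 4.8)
The plan is to prove that any finite hypergraph admits a finite proper colouring by exhibiting an explicit one, namely the trivial colouring that assigns a distinct colour to every vertex. Since $\mathbf{V}_{\textit{h}}$ is finite, say $\vert \mathbf{V}_{\textit{h}} \vert = n$ for some $n \in \mathbb{N}$, I would enumerate the vertices as $v_1, v_2, \dots, v_n$ and define the mapping $\chi : \mathbf{V}_{\textit{h}} \longrightarrow \{1, 2, \dots, n\}$ by $\chi(v_i) = i$. This is manifestly a function into a finite set of $n$ colours, so it remains only to verify that it satisfies the defining condition of a colouring from Def.~\ref{def:Colouring}.

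First I would check the colouring condition directly. By construction $\chi$ is injective: if $v_1 \neq v_2$ are distinct vertices then they are $v_i$ and $v_j$ with $i \neq j$, whence $\chi(v_1) = i \neq j = \chi(v_2)$. In particular, for every hyperedge $e \in \mathbf{E}_{\textit{h}}$ and any two distinct vertices $v_1, v_2 \in e$, we have $v_1 \neq v_2$ and therefore $\chi(v_1) \neq \chi(v_2)$. This is exactly the requirement in Def.~\ref{def:Colouring}, so $\chi$ is a valid colouring, and it uses $k = n$ colours, establishing that $h$ is $k$-colourable in the sense of Def.~\ref{kColDef} with a finite $k$.

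The argument is essentially immediate, so I do not anticipate a genuine mathematical obstacle; the only point requiring any care is making sure the trivial colouring really does respect the hyperedge constraint in the general hypergraph setting (as opposed to ordinary graphs). Because the constraint in Def.~\ref{def:Colouring} only forbids two \emph{distinct} vertices sharing a hyperedge from receiving the same colour, and an injective map gives distinct colours to all distinct vertices regardless of hyperedge structure, the all-distinct colouring trivially satisfies it. One subtlety worth flagging is the degenerate case of a hyperedge that is a singleton (or empty): the condition ranges over pairs of \emph{distinct} vertices within the edge, so such edges impose no constraint and the argument is unaffected. Hence finiteness of $\mathbf{V}_{\textit{h}}$ alone — with no reliance on the edge set at all — suffices to produce a finite proper colouring, which is the content of the lemma.

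Although this gives the cleanest proof, I note that one could alternatively invoke the greedy colouring of Algorithm~\ref{alg:greedyColour}: running it over a finite vertex set terminates, and at each vertex the set of forbidden colours is finite (bounded by the number of incident hyperedges times their sizes), so the minimum available colour is always a finite natural number, again producing a colouring with finitely many colours. If later results require a bound on the number of colours in terms of the maximum vertex degree $\Delta(\textit{h})$, the greedy route is preferable; but for the bare existence statement of Lemma~\ref{lem:colourHypergraph}, the explicit injective colouring is the most economical and is the approach I would present.
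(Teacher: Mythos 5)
Your proof is correct, but it takes a genuinely different route from the paper's. You directly exhibit the trivial injective colouring $\chi(v_i)=i$, which satisfies Def.~\ref{def:Colouring} because that definition only constrains pairs of \emph{distinct} vertices sharing a hyperedge, so any injection works and $k=\vert \mathbf{V}_{\textit{h}}\vert$ suffices. The paper instead builds an auxiliary graph $\mathbf{G}_{\textit{h}}$ on the same vertex set by replacing every hyperedge with a clique (adding an edge $(j,k)$ for every pair of distinct vertices in a common hyperedge), argues by contraposition that any valid colouring of $\mathbf{G}_{\textit{h}}$ is a valid colouring of $\textit{h}$, and then invokes the fact that finite graphs are finitely colourable. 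Your argument is more elementary and more self-contained: the paper's final step (``colourings of finite graphs are always finite'') is itself asserted without proof, and its justification is essentially your observation applied to graphs. What the paper's reduction buys is the link to standard graph-colouring machinery: once hypergraph colouring is reduced to colouring $\mathbf{G}_{\textit{h}}$, the greedy procedure of Algorithm~\ref{alg:greedyColour} and degree-based bounds such as $k \leq \Delta(\mathbf{G}_{\textit{h}})+1$ become available, typically yielding far fewer than $\vert \mathbf{V}_{\textit{h}} \vert$ colours; the number of colours controls the shape of the chromatic subsets used to build $\mathbb{G}_{\mathcal{H}}$ in Lemma~\ref{lem:GExists}, so the choice of colouring does have downstream consequences even though the bare existence statement is all that Lemma~\ref{lem:colourHypergraph} claims. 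You anticipate exactly this trade-off in your closing remark, so your understanding is complete; for the statement as written, your proof is valid and arguably tighter than the paper's.
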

\begin{proof}
    Define an auxillary graph $\mathbf{G}_{\textit{h}}$ with the exact same set of vertices as $\textit{h}$ but define it's set of edges, $\mathbf{E}_G$, by:
    for every hyperedge, $e$, in $\textit{h}$,
    \begin{align}
        \forall j \in e, \forall k \not = j \in e, (j,k) \in \mathbf{E}_G.
    \end{align}
    Assume we have an invalid colouring of $\textit{h}$ that assigns a colour to every vertex, labeled $\mathcal{F}$. As it is invalid, $\exists e \in \mathbf{E}_{\textit{h}}$ such that $\exists j, k \in e$ such that $\mathcal{F} (j) = \mathcal{F} (k)$. But as $\exists e \in \mathbf{E}_{\textit{h}}$ such that $j, k \in e$, by the definition of $\mathbf{G}_{\textit{h}}$, $(j,k) \in \mathbf{E}_G$. As $\mathcal{F} (j) = \mathcal{F} (k)$ and $(j,k) \in \mathbf{E}_G$, $\mathcal{F}$ is not a valid colouring of $\mathbf{G}_{\textit{h}}$. Hence if the colouring is invalid for $\textit{h}$, it is invalid for $\mathbf{G}_{\textit{h}}$. Taking the contrapositive of this, any valid colouring of $\mathbf{G}_{\textit{h}}$ is a valid colouring of $\textit{h}$. Colourings of finite graphs are always finite. Hence for some $k \in \mathbb{N}$,  $\textit{h}$ is $k$-colourable.
\end{proof}
The existence of a finite colouring of any finite hypergraph, as established in Lemma.~\ref{lem:colourHypergraph}, is used in Lemma~\ref{lem:GExists}.
\begin{lemma}
\label{lem:GExists}
    For any spin Hamiltonian, $\mathcal{H} = \sum_{q = 1} \big( \mathcal{H}_q \big)$, the corresponding $\mathbb{G}_{\mathcal{H}}$ exists.
\end{lemma}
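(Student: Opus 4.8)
The plan is to build $\mathbb{G}_{\mathcal{H}}$ explicitly from the graph-colouring data just introduced, rather than argue existence abstractly. First I would pass from $\mathcal{H} = \sum_{\mathbf{u} \in J}(c_{\mathbf{u}} \mathcal{P}_{\mathbf{u}})$ to the hypergraph representing it (Def.~\ref{def:hypergraphRepresent}), whose vertices are the $N$ spin indices and whose hyperedges are the supports of the individual terms $\mathcal{P}_{\mathbf{u}}$. Since this hypergraph is finite, Lemma~\ref{lem:colourHypergraph} guarantees a finite colouring; I would fix one such colouring $\chi$ (for concreteness the greedy colouring of Algorithm~\ref{alg:greedyColour}), with chromatic subsets $\tilde{\mathbb{V}}_1, \ldots, \tilde{\mathbb{V}}_k$ as in Def.~\ref{cromatSetDef}. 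The structural fact I would extract is the defining property of a colouring (Def.~\ref{def:Colouring}): no hyperedge contains two vertices of the same colour, so the support of every term $\mathcal{P}_{\mathbf{u}}$ meets each colour class $\tilde{\mathbb{V}}_j$ in at most one qubit.

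Next I would define the candidate set. For each colour $j$ and each $P \in \{\mathcal{X}, \mathcal{Z}\}$, let $\sigma_{j,P}$ be the Pauli string acting as $P$ on every qubit in $\tilde{\mathbb{V}}_j$ and as $\mathcal{I}$ everywhere else, and set $\mathbb{G}_{\mathcal{H}} = \{\sigma_{j,P}\}$. Each $\sigma_{j,P}$ is manifestly a member of the ambient set $\{q\sigma : \sigma \in \{\mathcal{X},\mathcal{Y},\mathcal{Z},\mathcal{I}\}^N, q \in \{\pm1,\pm i\}\}$ of Def.~\ref{def:G} (with $q = +1$), so the construction produces a legitimate candidate, and $\vert \mathbb{G}_{\mathcal{H}} \vert \le 2k$ is finite.

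The verification step is where the colouring does its work. Fixing any non-identity term $\mathcal{P}_{\mathbf{u}}$, I would pick an arbitrary qubit $m$ in its support, let $j = \chi(m)$, and let $Q \in \{\mathcal{X},\mathcal{Y},\mathcal{Z}\}$ be the single-qubit Pauli that $\mathcal{P}_{\mathbf{u}}$ applies at $m$. I would then invoke the standard fact that two Pauli strings anti-commute if and only if they anti-commute locally on an odd number of qubits. Because the support of $\mathcal{P}_{\mathbf{u}}$ contains at most one qubit of colour $j$ --- namely $m$ --- the operator $\sigma_{j,P}$ can anti-commute with $\mathcal{P}_{\mathbf{u}}$ locally only at $m$: on every other colour-$j$ qubit $\mathcal{P}_{\mathbf{u}}$ acts trivially, and off colour $j$ the operator $\sigma_{j,P}$ acts trivially. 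Choosing $P \in \{\mathcal{X},\mathcal{Z}\}$ to anti-commute with $Q$ (always possible, since each of $\mathcal{X},\mathcal{Y},\mathcal{Z}$ anti-commutes with at least one of $\mathcal{X},\mathcal{Z}$) then yields exactly one local anti-commutation, hence $\{\sigma_{j,P}, \mathcal{P}_{\mathbf{u}}\} = 0$. As $\mathbf{u}$ was arbitrary, $\mathbb{G}_{\mathcal{H}}$ satisfies Def.~\ref{def:G} and therefore exists.

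I expect the only real obstacle to be the parity bookkeeping in the verification step: one must argue carefully that the at-most-one-colour-$j$-qubit-per-support property, together with the trivial action of $\sigma_{j,P}$ off its colour class, forces the number of anti-commuting positions to be exactly one (and never an even number $\ge 2$), so that local anti-commutation at $m$ genuinely upgrades to global anti-commutation. Everything else --- finiteness of the colouring, and membership in the ambient Pauli set --- is immediate from the cited results.
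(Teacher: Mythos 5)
Your proof is correct and follows essentially the same route as the paper's: represent $\mathcal{H}$ by its interaction hypergraph (Def.~\ref{def:hypergraphRepresent}), invoke Lemma~\ref{lem:colourHypergraph} for a finite colouring, and exploit the fact that each term's support meets each colour class in at most one qubit to force anti-commutation at a single overlap position. The only real difference is that the paper takes $\mathbb{G}_{\mathcal{H}}$ to be the union of \emph{all} Pauli strings supported on each colour class and then argues a suitable element can be chosen, whereas your leaner $2k$-element set $\big\{\sigma_{j,P} \,\big\vert\, j \leq k,\ P \in \{\mathcal{X},\mathcal{Z}\}\big\}$ achieves the same defining property of Def.~\ref{def:G} with a manifestly small, explicitly constructed set, and your odd-parity argument upgrading local to global anti-commutation is precisely the step the paper treats more informally.
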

\begin{proof}
    Given $\mathcal{H}$, the corresponding hypergraph (as defined in Def.~\ref{def:hypergraphRepresent}), $h$, can be defined. By Lemma~\ref{lem:colourHypergraph}, for any $\mathcal{H}$, $h$ is $k$-colourable (for some finite $k \in \mathbb{N}$) so let $\nu_1$, $\nu_2$, ..., $\nu_k$ be the chromatic sets of one such $k$-colouring of $h$.
    Let $\mathcal{\sigma}^{(\nu_j)}_{v_i}$ denote a set of Pauli gates -- specified by the string $v_j \in \{0, 1, 2, 3 \}^{\vert \nu_j \vert}$ -- acting on the qubits with indices in the chromatic set, $\nu_j$. Then we define:
    \begin{align}
        \mathbb{G}_{\mathcal{H}} = \bigcup^k_{j=1} \bigg( \bigg\{ \mathcal{\sigma}^{(\nu_j)}_{v_j} \text{ } \big \vert \text{ }v_j \in \{0, 1, 2, 3 \}^{\vert \nu_j \vert} \bigg \} \bigg).
    \end{align}
    Any term $\mathcal{H}_q$ in $\mathcal{H}$, corresponds to a single hyperedge, $e$, in the corresponding hypergraph, $h$. By the definition of a colouring of a hypergraph, there must exists at least one chromatic set, $\nu_i$, such that $\vert e \cap \nu_i \vert = 1$.  So for any chromatic set, $v_j$, $\mathcal{\sigma}^{(\nu_j)}_{v_j}$ and $\mathcal{H}_q$ both act non-trivially on only a single qubit (i.e. there is only a single qubit that is acted on by both $\mathcal{\sigma}^{(\nu_j)}_{v_j}$ and $\mathcal{H}_q$). Therefore, a choice of $v_i$ can be made so that $\mathcal{\sigma}^{(\nu_i)}_{v_i}$ and $\mathcal{H}_q$ anti-commute.
\end{proof}
\subsection{Proving Theorem~\ref{thm:inversioWorks}: Main Section}
The diamond norm has many useful properties, but the most useful for this paper is the chain property, proven in Lemma~\ref{ChainingPropertyLemma}.
\begin{lemma}
    \label{ChainingPropertyLemma}
    Any sub-multiplicative norm bounded by one on CPTP maps, $\big \vert \big \vert \cdot  \big \vert \big \vert_{\mathrm{sub}}$, exhibits the chaining property, i.e. for any operators, $\mathcal{A}, \mathcal{B}, \mathcal{C}, \mathcal{D}$:
    \begin{align}
        \big \vert \big \vert \mathcal{A}\mathcal{B} - \mathcal{C}\mathcal{D} \big \vert \big \vert_{\mathrm{sub}} 
        &\leq
        \big \vert \big \vert \mathcal{A} - \mathcal{C} \big \vert \big \vert_{\mathrm{sub}} + \big \vert \big \vert \mathcal{B} - \mathcal{D} \big \vert \big \vert_{\mathrm{sub}}.
    \end{align}
    Examples of norms this applies to include the diamond norm and the trace norm (nuclear norm).
\end{lemma}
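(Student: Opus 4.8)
The plan is to reduce the two-factor difference $\mathcal{A}\mathcal{B} - \mathcal{C}\mathcal{D}$ to single-factor differences by inserting a suitable intermediate term, then apply the triangle inequality followed by sub-multiplicativity. First I would add and subtract the hybrid product $\mathcal{A}\mathcal{D}$, writing
\begin{align}
    \mathcal{A}\mathcal{B} - \mathcal{C}\mathcal{D}
    &=
    \mathcal{A}\mathcal{B} - \mathcal{A}\mathcal{D} + \mathcal{A}\mathcal{D} - \mathcal{C}\mathcal{D} \nonumber \\
    &=
    \mathcal{A}\big( \mathcal{B} - \mathcal{D} \big) + \big( \mathcal{A} - \mathcal{C} \big)\mathcal{D}.
\end{align}
This is the one genuinely decisive step; everything afterward is mechanical.

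Next I would invoke the triangle inequality (which every norm satisfies) on this decomposition, giving
\begin{align}
    \big \vert \big \vert \mathcal{A}\mathcal{B} - \mathcal{C}\mathcal{D} \big \vert \big \vert_{\mathrm{sub}}
    \leq
    \big \vert \big \vert \mathcal{A}\big( \mathcal{B} - \mathcal{D} \big) \big \vert \big \vert_{\mathrm{sub}}
    + \big \vert \big \vert \big( \mathcal{A} - \mathcal{C} \big)\mathcal{D} \big \vert \big \vert_{\mathrm{sub}}.
\end{align}
I would then apply the hypothesised sub-multiplicativity to each term to factor out $\mathcal{A}$ and $\mathcal{D}$, obtaining an upper bound of $\big \vert \big \vert \mathcal{A} \big \vert \big \vert_{\mathrm{sub}} \big \vert \big \vert \mathcal{B} - \mathcal{D} \big \vert \big \vert_{\mathrm{sub}} + \big \vert \big \vert \mathcal{A} - \mathcal{C} \big \vert \big \vert_{\mathrm{sub}} \big \vert \big \vert \mathcal{D} \big \vert \big \vert_{\mathrm{sub}}$.

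Finally I would use the remaining hypothesis that the norm is bounded by one on CPTP maps: since $\mathcal{A}$ and $\mathcal{D}$ are CPTP maps, $\big \vert \big \vert \mathcal{A} \big \vert \big \vert_{\mathrm{sub}} \leq 1$ and $\big \vert \big \vert \mathcal{D} \big \vert \big \vert_{\mathrm{sub}} \leq 1$, which collapses the bound to $\big \vert \big \vert \mathcal{B} - \mathcal{D} \big \vert \big \vert_{\mathrm{sub}} + \big \vert \big \vert \mathcal{A} - \mathcal{C} \big \vert \big \vert_{\mathrm{sub}}$, as claimed. The symmetric choice of inserting $\mathcal{C}\mathcal{B}$ instead would work identically, using the bounds on $\mathcal{C}$ and $\mathcal{B}$. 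There is no real obstacle here beyond ensuring the factors that get bounded by one are precisely the CPTP maps among $\{\mathcal{A}, \mathcal{B}, \mathcal{C}, \mathcal{D}\}$; the statement implicitly assumes the relevant factors ($\mathcal{A}, \mathcal{D}$, or alternatively $\mathcal{C}, \mathcal{B}$) are CPTP, which is exactly the setting in which the lemma is later applied to chains of time-evolution and inversion sub-circuits.
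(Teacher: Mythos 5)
Your proposal is correct and takes essentially the same route as the paper's proof: both insert the hybrid term $\mathcal{A}\mathcal{D}$ (the paper writes this as adding and subtracting $\mathcal{A}\mathcal{D}$ inside the norm), apply the triangle inequality and sub-multiplicativity to obtain $\big \vert \big \vert \mathcal{A} \big \vert \big \vert_{\mathrm{sub}} \cdot \big \vert \big \vert \mathcal{B} - \mathcal{D} \big \vert \big \vert_{\mathrm{sub}} + \big \vert \big \vert \mathcal{A} - \mathcal{C} \big \vert \big \vert_{\mathrm{sub}} \cdot \big \vert \big \vert \mathcal{D} \big \vert \big \vert_{\mathrm{sub}}$, and then use the bound of one on CPTP maps to drop the factors $\big \vert \big \vert \mathcal{A} \big \vert \big \vert_{\mathrm{sub}}$ and $\big \vert \big \vert \mathcal{D} \big \vert \big \vert_{\mathrm{sub}}$. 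Your version simply makes explicit the intermediate steps (and the CPTP assumption on the relevant factors) that the paper compresses into a single chain of inequalities.
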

\begin{proof}
        \begin{align}
        \big \vert \big \vert \mathcal{A}\mathcal{B} - \mathcal{C}\mathcal{D} \big \vert \big \vert_{\mathrm{sub}}
        =
        \big \vert \big \vert \mathcal{A}\mathcal{B} -\mathcal{C}\mathcal{D} + \mathcal{A}\mathcal{D} - \mathcal{A}\mathcal{D} \big \vert \big \vert_{\mathrm{sub}}
        \leq
        \big \vert \big \vert \mathcal{A} \vert \big \vert_{\mathrm{sub}} \cdot \big \vert \big \vert \mathcal{B} - \mathcal{D} \big \vert \big \vert_{\mathrm{sub}} + \big \vert \big \vert \mathcal{A} - \mathcal{C} \big \vert \big \vert_{\mathrm{sub}} \cdot \big \vert \big \vert \mathcal{D} \big \vert \big \vert_{\mathrm{sub}}
        \leq
        \big \vert \big \vert \mathcal{A} - \mathcal{C} \big \vert \big \vert_{\mathrm{sub}}
        +
        \big \vert \big \vert \mathcal{B} - \mathcal{D} \big \vert \big \vert_{\mathrm{sub}}.
        \end{align}
\end{proof}
I can now start on the direct path towards proving Theorem~\ref{thm:inversioWorks}. The first step is Lemma~\ref{lem:sumOverSigmas}, which is purely mathematical.
\begin{lemma}[Lemma 9, Ref.~\cite{odake2024universal}]
    \label{lem:sumOverSigmas}
    $\forall J \subseteq \{ 0, 1, 2, 3 \}^N$, for any Hamiltonian, $\mathcal{H} \in \textit{span} \big( \{ \mathcal{P}_{\mathbf{u}} \}_{\mathbf{u} \in J} \big)$,
    \begin{align}
        \sum_{\mathcal{\sigma} \in \mathbb{G}'_{\mathcal{H}}} \bigg( \mathcal{\sigma} \mathcal{H} \mathcal{\sigma} \bigg)
        &=
        \dfrac{L \textit{Tr} \big( \mathcal{H} \big)}{2^N} \mathcal{I},
    \end{align}
    where $\mathbb{G}'_{\mathcal{H}}$ is as in Def.~\ref{def:G'}, $L$ is as in Def.~\ref{def:L}, and $N$ is the number of qubits $\mathcal{H}$ acts on.  
\end{lemma}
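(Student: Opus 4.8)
The plan is to expand $\mathcal{H}$ in the Pauli basis as $\mathcal{H} = \sum_{\mathbf{u} \in J} c_{\mathbf{u}} \mathcal{P}_{\mathbf{u}}$ and exploit linearity of the conjugation sum, so that
\begin{align}
\sum_{\sigma \in \mathbb{G}'_{\mathcal{H}}} \sigma \mathcal{H} \sigma = \sum_{\mathbf{u} \in J} c_{\mathbf{u}} \sum_{\sigma \in \mathbb{G}'_{\mathcal{H}}} \sigma \mathcal{P}_{\mathbf{u}} \sigma .
\end{align}
Each $\sigma \in \mathbb{G}'_{\mathcal{H}}$ is a phase-free Hermitian Pauli string with $\sigma = \sigma^{-1}$, so conjugation acts as $\sigma \mathcal{P}_{\mathbf{u}} \sigma = (-1)^{\langle \sigma, \mathcal{P}_{\mathbf{u}} \rangle} \mathcal{P}_{\mathbf{u}}$, the exponent being $0$ when $\sigma$ and $\mathcal{P}_{\mathbf{u}}$ commute and $1$ when they anti-commute. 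The inner sum therefore collapses to a scalar multiple of $\mathcal{P}_{\mathbf{u}}$, and the whole problem reduces to evaluating, for each $\mathbf{u}$, the character sum $S_{\mathbf{u}} = \sum_{\sigma \in \mathbb{G}'_{\mathcal{H}}} (-1)^{\langle \sigma, \mathcal{P}_{\mathbf{u}}\rangle}$.

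Next I would show $S_{\mathbf{u}}$ is either $L$ or $0$. The key point is that the commutation indicator $\sigma \mapsto \langle \sigma, \mathcal{P}_{\mathbf{u}} \rangle$ is $\mathbb{F}_2$-linear in $\sigma$ by bilinearity of the Pauli symplectic form, so $\sigma \mapsto (-1)^{\langle \sigma, \mathcal{P}_{\mathbf{u}} \rangle}$ is a group homomorphism from $\mathbb{G}'_{\mathcal{H}}$ into $\{\pm 1\}$. If this homomorphism is trivial, every term is $+1$ and $S_{\mathbf{u}} = L$; if it is non-trivial, its kernel has index two, the values $+1$ and $-1$ occur equally often, and $S_{\mathbf{u}} = 0$. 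For each non-identity term ($\mathbf{u} \neq \vec{0}$), Def.~\ref{def:G} guarantees some element of $\mathbb{G}_{\mathcal{H}}$ — hence, up to phase, some element of $\mathbb{G}'_{\mathcal{H}}$ — anti-commutes with $\mathcal{P}_{\mathbf{u}}$, so the homomorphism is non-trivial and $S_{\mathbf{u}} = 0$. The only surviving term is the identity component $\mathbf{u} = \vec{0}$ (if it is present in $J$), which commutes with every $\sigma$ and gives $S_{\vec{0}} = L$.

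Assembling these, only the identity term survives, so $\sum_{\sigma} \sigma \mathcal{H} \sigma = c_{\vec{0}} L \,\mathcal{I}$. I would then fix the coefficient using tracelessness of the non-identity Paulis: $\tr(\mathcal{H}) = c_{\vec{0}} \tr(\mathcal{I}) = 2^N c_{\vec{0}}$, whence $c_{\vec{0}} L = L\,\tr(\mathcal{H})/2^N$, which is exactly the claim (and consistently yields $0$ when $\vec{0} \notin J$). The step I expect to be the main obstacle is justifying that $\sigma \mapsto (-1)^{\langle \sigma, \mathcal{P}_{\mathbf{u}}\rangle}$ is genuinely a homomorphism on $\mathbb{G}'_{\mathcal{H}}$: this requires $\mathbb{G}'_{\mathcal{H}}$ to be closed under phase-free multiplication, i.e. to carry a group structure, which is not immediate from Def.~\ref{def:G'} alone and must be traced back to the explicit construction underlying Lemma~\ref{lem:GExists} (or imported from Ref.~\cite{odake2024universal}). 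I would also verify that stripping phases in passing from $\mathbb{G}_{\mathcal{H}}$ to $\mathbb{G}'_{\mathcal{H}}$ leaves all commutation relations — and hence each $S_{\mathbf{u}}$ — unchanged, which holds because commutation of Pauli strings is insensitive to overall scalar phases.
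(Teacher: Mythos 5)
First, a framing point: the paper never proves this lemma --- it is imported wholesale as Lemma 9 of Ref.~\cite{odake2024universal} --- so your attempt cannot be checked against an internal argument and must stand on its own. Its skeleton is the standard Pauli-twirl argument: expand $\mathcal{H}$ in the Pauli basis, use $\sigma \mathcal{P}_{\mathbf{u}} \sigma = (-1)^{\langle \sigma, \mathcal{P}_{\mathbf{u}}\rangle}\mathcal{P}_{\mathbf{u}}$, and kill every non-identity term by character orthogonality, leaving $c_{\vec{0}} L \mathcal{I} = L \tr ( \mathcal{H} ) \mathcal{I}/2^N$. This is correct \emph{provided} $\mathbb{G}'_{\mathcal{H}}$ is closed under multiplication modulo phases, which is exactly the caveat you flagged at the end.

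That caveat, however, is a genuine gap, and the repair you propose --- tracing closure back to the construction in Lemma~\ref{lem:GExists} --- fails. That construction takes the \emph{union} of the full local Pauli sets over the chromatic classes, and a union of subgroups is not a subgroup: $\mathcal{X}\otimes\mathcal{I}$ and $\mathcal{I}\otimes\mathcal{X}$ lie in it while their product $\mathcal{X}\otimes\mathcal{X}$ does not. Worse, the lemma is simply false for that construction, and hence false under Def.~\ref{def:G}, Def.~\ref{def:G'}, and Def.~\ref{def:L} as literally stated. Concretely, take $N = 2$, $\mathcal{H} = \mathcal{Z}\otimes\mathcal{Z}$, and chromatic classes $\nu_1 = \{1\}$, $\nu_2 = \{2\}$, so that Lemma~\ref{lem:GExists} yields $\mathbb{G}'_{\mathcal{H}} = \{\mathcal{I}\otimes\mathcal{I},\, \mathcal{X}\otimes\mathcal{I},\, \mathcal{Y}\otimes\mathcal{I},\, \mathcal{Z}\otimes\mathcal{I},\, \mathcal{I}\otimes\mathcal{X},\, \mathcal{I}\otimes\mathcal{Y},\, \mathcal{I}\otimes\mathcal{Z}\}$; this set satisfies Def.~\ref{def:G} (e.g.\ $\mathcal{X}\otimes\mathcal{I}$ anticommutes with $\mathcal{Z}\otimes\mathcal{Z}$) and has $L = 7$, yet
\begin{align}
\sum_{\sigma \in \mathbb{G}'_{\mathcal{H}}} \sigma \big( \mathcal{Z}\otimes\mathcal{Z} \big) \sigma
= \big( 1 - 1 - 1 + 1 - 1 - 1 + 1 \big)\, \mathcal{Z}\otimes\mathcal{Z}
= - \mathcal{Z}\otimes\mathcal{Z}
\neq
0 = \dfrac{L \tr \big( \mathcal{Z}\otimes\mathcal{Z} \big)}{2^N} \mathcal{I}.
\end{align}
The failure is exactly where you predicted: $\sigma \mapsto (-1)^{\langle \sigma, \mathcal{P}_{\mathbf{u}}\rangle}$ is a homomorphism only on a group, so the index-two-kernel counting that forces $S_{\mathbf{u}} = 0$ breaks down on a mere anticommutation-covering set. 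The closure property therefore cannot be obtained from anything in this paper; it has to be imported from the actual construction of Ref.~\cite{odake2024universal}, where the conjugating set carries the required group structure, or the lemma's hypotheses must be strengthened to demand that $\mathbb{G}'_{\mathcal{H}}$ be multiplicatively closed (modulo phases). So your proof is incomplete at the step you yourself identified, and the specific way you propose to close it demonstrably fails --- though your analysis has the considerable side benefit of exposing that the paper's own Lemma~\ref{lem:GExists} construction is inconsistent with this lemma, and hence with the proof of Theorem~\ref{thm:inversioWorks} that relies on both.
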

We are then fully equipped to finish the proof of  Theorem~\ref{thm:inversioWorks}.
\begin{proof}[Proof of Theorem~\ref{thm:inversioWorks}]
    Let $\Phi$ be a CPTP map denoting the implementation of the process defined in Algorithm~\ref{alg:addingSinglesForInvert} to the argument density matrix, $\rho$. Then, letting $\mathbb{G}_{\mathcal{H}}'' = \mathbb{G}_{\mathcal{H}}' \backslash \big \{ \mathcal{I} \big \}$,
    \begin{align}
    \label{eqn:ERhoInitial}
        \Phi \big( \rho \big)
        &=
        \dfrac{1}{L - 1} \sum_{\mathcal{\sigma} \in \mathbb{G}_{\mathcal{H}}''} \bigg( 
            \mathcal{\sigma} e^{-i t \mathcal{H} / M} \mathcal{\sigma} \rho \mathcal{\sigma} e^{i t \mathcal{H} / M} \mathcal{\sigma}
        \bigg).
    \end{align}
    Taking the Taylor expansion of the exponentials in the summand of Eqn.~\ref{eqn:ERhoInitial}:
    \begin{align}
        \label{eqn:CPTPwithExponentialsExpanded}
        \Phi \big( \rho \big)
        &=
        \dfrac{1}{L - 1} \sum_{\mathcal{\sigma} \in \mathbb{G}_{\mathcal{H}}''} \bigg( 
            \mathcal{\sigma} \bigg [ \mathcal{I} - \dfrac{i t \mathcal{H}}{M}  \bigg] \mathcal{\sigma} \rho \mathcal{\sigma} \bigg [ \mathcal{I} + \dfrac{i t \mathcal{H}}{M}  \bigg] \mathcal{\sigma}
        \bigg)
        +
        O \bigg( \dfrac{t^2}{M^2} \bigg)\\
        &=
        \dfrac{1}{L - 1} \sum_{\mathcal{\sigma} \in \mathbb{G}_{\mathcal{H}}''} \bigg(
            \rho
            \bigg)
            +
            \dfrac{i t}{(L - 1)M} \bigg[
            \rho \sum_{\mathcal{\sigma} \in \mathbb{G}_{\mathcal{H}}''} \bigg( \mathcal{\sigma} \mathcal{H} \mathcal{\sigma}
        \bigg)
        - \sum_{\mathcal{\sigma} \in \mathbb{G}_{\mathcal{H}}''} \bigg(
            \mathcal{\sigma}\mathcal{H} \mathcal{\sigma} \bigg)\rho
        \bigg]
        +
        O \bigg( \dfrac{t^2}{M^2} \bigg).
    \end{align}
    Then, using Lemma~\ref{lem:sumOverSigmas} to evaluate the summations in Eqn.~\ref{eqn:CPTPwithExponentialsExpanded}:
    \begin{align}
        \Phi \big( \rho \big)
        &=
        \rho
            +
        \dfrac{i t}{(L - 1)M} \bigg[
        \rho \bigg\{ - \mathcal{H} + \dfrac{L \textit{Tr} \big( \mathcal{H}\big)}{2^N} \mathcal{I} \bigg\}
        - \bigg\{ - \mathcal{H} + \dfrac{L \textit{Tr} \big( \mathcal{H}\big)}{2^N} \mathcal{I} \bigg\} \rho
        \bigg]
        +
        O \bigg( \dfrac{t^2}{M^2} \bigg)
        =
        \rho
        +
        \dfrac{i t}{(L - 1)M} \bigg[
          \mathcal{H} \rho - \rho \mathcal{H}
        \bigg]
        +
        O \bigg( \dfrac{t^2}{M^2} \bigg).
    \end{align}
    Consider a correctly implemented reverse time evolution and take the Taylor series of the exponentials:
    \begin{align}
        &e^{i \mathcal{H}t/M} \rho e^{-i \mathcal{H}t/M}
        =
        \big[ \mathcal{I} + \dfrac{i \mathcal{H}t}{M} \big] \rho \big[ \mathcal{I} -  \dfrac{i \mathcal{H}t}{M} \big]  +
        O \bigg( \dfrac{t^2}{M^2} \bigg)
        =
        \rho
        +
         \dfrac{i \mathcal{H}t}{M}  \rho 
        -
        \rho  \dfrac{i \mathcal{H}t}{M}
         +
        O \bigg( \dfrac{t^2}{M^2} \bigg)
        =
        \rho
        +
         \dfrac{i t}{M} \bigg[ \mathcal{H} \rho 
        -
        \rho \mathcal{H} \bigg]
         +
        O \bigg( \dfrac{t^2}{M^2} \bigg)\\
        \label{eqn:RoughBound}
        \Rightarrow &\bigg \vert \bigg \vert \Phi \big( \rho \big) - e^{i \mathcal{H}t/(M[L -1])} \rho e^{-i \mathcal{H}t/(M[L -1])} \bigg \vert \bigg \vert_{\diamond}
        =
         \bigg \vert \bigg \vert \text{ } \rho
        +
        \dfrac{i t}{(L - 1)M} \bigg[
          \mathcal{H} \rho - \rho \mathcal{H}
        \bigg]
        -
        \rho
        -
        \dfrac{i t}{(L - 1)M} \bigg[
          \mathcal{H} \rho - \rho \mathcal{H}
        \bigg]
         \bigg \vert \bigg \vert_{\diamond} +
        O \bigg( \dfrac{t^2}{M^2} \bigg)
        =
        O \bigg( \dfrac{t^2}{M^2} \bigg).
    \end{align}
    If $M \in \mathbb{N}$ iterations of this time evolution are applied\footnote{Where $\Phi^M$ is a CPTP map denoting $\Phi$ being applied $M \in \mathbb{N}$ times.}. Using the chaining property (from Lemma~\ref{ChainingPropertyLemma}) of the diamond norm:
    \begin{align}
        \bigg \vert \bigg \vert \Phi^M \big( \rho \big) - e^{i \mathcal{H}t/(L -1)} \rho e^{-i \mathcal{H}t/(L - 1)} \bigg \vert \bigg \vert_{\diamond}
        &\leq
        M \bigg \vert \bigg \vert \Phi \big( \rho \big) - e^{i \mathcal{H}t/(M[L -1])} \rho e^{-i \mathcal{H}t/(M[L -1])} \bigg \vert \bigg \vert_{\diamond}
        =
        O \bigg( \dfrac{t^2}{M} \bigg).
    \end{align}
\end{proof}
To examine the exact error in $\Phi^M \big( \rho \big)$, consider just the higher order terms of the Taylor expansions in  Eqn.~\ref{eqn:ERhoInitial}:
\begin{align}
          \bigg \vert \bigg \vert \Phi \big( \rho \big) - e^{i \mathcal{H}t/(M[L -1])} \rho e^{-i \mathcal{H}t/(M[L -1])} \bigg \vert \bigg \vert_{\diamond}
          &=
          \bigg \vert \bigg \vert
            \sum_{n + m \geq 2} \bigg( \dfrac{1}{n! m!} \dfrac{1}{L - 1} \sum_{\mathcal{\sigma} \in \mathbb{G}_{\mathcal{H}}''} \bigg(\mathcal{\sigma} \bigg[ \dfrac{-it \mathcal{H}}{M } \bigg]^n \mathcal{\sigma} \rho \mathcal{\sigma} \bigg[ \dfrac{it \mathcal{H}}{M } \bigg]^m \mathcal{\sigma} \bigg)
            -
            \dfrac{1}{n! m!}
            \bigg[ \dfrac{it \mathcal{H}}{M (L - 1)} \bigg]^n \rho \bigg[ \dfrac{-it \mathcal{H}}{M (L - 1)} \bigg]^m\bigg)
         \bigg \vert \bigg \vert_{\diamond}\\
         &\leq
            \dfrac{1}{n! m! (L - 1)} \sum_{n + m \geq 2} \bigg( \sum_{\mathcal{\sigma} \in \mathbb{G}_{\mathcal{H}}''} \bigg( \bigg[ \dfrac{t \big \vert \big \vert \mathcal{H} \big \vert \big \vert_{\diamond}}{M} \bigg]^{n+m}
            \bigg)
            +
            \bigg[ \dfrac{t \big \vert \big \vert \mathcal{H} \big \vert \big \vert_{\diamond}}{M} \bigg]^{n+m}
         \bigg)
        = \dfrac{L}{ (L - 1)} \bigg( e^{2 \frac{t  \vert \vert \mathcal{H} \vert \vert_{\diamond}}{M}} - 2 \dfrac{t \big \vert \big \vert \mathcal{H} \big \vert \big \vert_{\diamond}}{M} - 1 \bigg).
    \end{align}
If we want to bound the error, $\big \vert \big \vert \Phi^M \big( \rho \big) - e^{i \mathcal{H}t/(L -1)} \rho e^{-i \mathcal{H}t/(L - 1)} \big \vert \big \vert_{\diamond}$, by $\epsilon \in \mathbb{R}$ we set $M$ such that:
\begin{align}
        M \geq
        \dfrac{2 t^2 \big \vert \big \vert \mathcal{H} \big \vert \big \vert^2_{\diamond} L}{\epsilon (L - 1)}
        \Rightarrow
        \bigg \vert \bigg \vert \Phi^M \big( \rho \big) - e^{i \mathcal{H}t/(L -1)} \rho e^{-i \mathcal{H}t/(L - 1)} \bigg \vert \bigg \vert_{\diamond}
        \leq
        \dfrac{L}{ (L - 1)} \bigg( e^{ 2t  \vert \vert \mathcal{H} \vert \vert_{\diamond}/M} - 2 \dfrac{t \big \vert \big \vert \mathcal{H} \big \vert \big \vert_{\diamond}}{M} - 1 \bigg)
         \leq
         \epsilon,
\end{align}
where we have assumed that $t \big \vert \big \vert \mathcal{H} \big \vert \big \vert_{\diamond}$ is small enough to neglect powers of it higher than two.
$\big \vert \big \vert \mathcal{H} \big \vert \big \vert_{\diamond}$ can be  upper bounded, assuming that $\mathcal{H} = \sum_{j = 1} \big( c_j \mathcal{H}_j \big)$ (where each $\mathcal{H}_j$ is a Pauli string), by:
    $\sum_{j = 1} \bigg( \vert c_j \vert \bigg)$. This is exactly the value of $M$ used in Algorithm~\ref{alg:addingSinglesForInvert}, proving its correctness.
\end{document}